\newtheorem{theorem}{Theorem}
\newtheorem{corollary}{Corollary}
\newtheorem{lemma}{Lemma}
\newtheorem{definition}{Definition}
\newtheorem{proposition}{Proposition}
\newtheorem{observation}{Observation}
\def\BState{\State\hskip-\ALG@thistlm}
\def\RR{{\mathbb R}}
\def\ZZ{{\mathbb Z}}
\def\K{{\mathcal K}}
\DeclareMathOperator{\argmax}{argmax}
\DeclareMathOperator{\OPT}{OPT}
\DeclareMathOperator{\ident}{\bf I}
\DeclareMathOperator{\SigmaB}{\bf \Sigma}
\DeclareMathOperator{\muB}{\bf \mu}
\renewcommand{\algorithmicrequire}{\textbf{Input:}}
\renewcommand{\algorithmicensure}{\textbf{Output:}}
\title{On the Unreasonable Effectiveness of the Greedy Algorithm: \\ Greedy Adapts to Sharpness}
\author{Alfredo Torrico\thanks{Polytechnique Montreal, Montreal. Email: torrico.alf@gmail.com} \quad Mohit Singh\thanks{Georgia Institute of Technology, Atlanta. Email: mohitsinghr@gmail.com} \quad Sebastian Pokutta\thanks{Zuse Institute Berlin (ZIB), TU Berlin. Email: pokutta@zib.de}}
\date{}
\begin{document}

\maketitle

\begin{abstract}
Submodular maximization has been widely studied over the past decades, mostly because of its numerous applications in real-world problems. It is well known that the standard greedy algorithm guarantees a worst-case approximation factor of $1-1/e$ when maximizing a monotone submodular function under a cardinality constraint. However, empirical studies show that its performance is substantially better in practice. This raises a natural question of explaining this improved performance of the greedy algorithm. 

In this work, we define \emph{sharpness} for submodular functions as a candidate explanation for this phenomenon. The sharpness criterion is inspired by the concept of strong convexity in convex optimization. We show that the greedy algorithm provably performs better as the sharpness of the submodular function increases. This improvement ties in closely with the faster convergence rates of first order methods for sharp functions in convex optimization. Finally, we perform a computational study to empirically support our theoretical results and show that sharpness explains the greedy performance better than other justifications in the literature.
\end{abstract}



\section{Introduction}\label{sec:introduction}
During the last decade, the interest in constrained submodular maximization has increased significantly, especially due to its numerous applications in real-world problems. Common examples of these applications are influence modeling in social networks \citep{kempe_etal03}, sensor placement \citep{krause2008robust}, document summarization \citep{lin_etal09}, or in general constrained feature selection \citep{krause_etal05,das2008algorithms,krause_etal08a,krause_etal09,powers_etal16}.  To illustrate the submodular property, consider a simple example of selecting the most influential nodes $S$ in a social network where information is seeded at $S$ and is passed around in the network based on a certain stochastic process. Submodularity captures the natural property that the total number of nodes influenced marginally decreases as more nodes are seeded \citep{kempe_etal03}. Given the importance of submodular optimization, there has been significant progress in  designing new algorithms with provable guarantees \citep{calinescu2011maximizing,ene_etal16,buchbinder_etal16,sviridenko_04}. 

The most fundamental problem in submodular optimization is to maximize a monotonically increasing submodular function subject to a cardinality constraint. A classical result~\citep{nemhauser1978analysis,nemhauser1978hardness} shows that the greedy algorithm is a multiplicative $(1-1/e)$-approximation algorithm. Moreover, no other efficient algorithm can obtain a better guarantee \citep{nemhauser1978hardness}. However, empirical observations have shown that standard algorithms such as the greedy algorithm performs considerably better in practice. Explaining this phenomenon has been a tantalizing challenge. Are there specific properties in real world instances that the greedy algorithm exploits? An attempt to explain this phenomenon has been made with the concept of \emph{curvature}~\citep{conforti_etal84}. In simple words, this parameter measures how close to linear the objective function is. This line of work establishes a (best possible) approximation ratio of $1-\gamma/e$ using curvature $\gamma \in [0,1]$ as parameter \citep{sviridenko_etal15}.

In this work, we focus on giving an explanation for those instances in which the optimal solution clearly stands out over the rest of feasible solutions. For this, we consider the concept of \emph{sharpness} initially introduced in continuous optimization \citep{lojas63} and we adapt it to submodular optimization. Roughly speaking, this property measures the behavior of the objective function around the set of optimal solutions. Sharpness in continuous optimization translates in faster convergence rates. Equivalently, we will show that the greedy algorithm for submodular maximization performs better as the sharpness of the objective function increases, as a discrete analog of ascent algorithms in continuous optimization. 

Our main contributions in this work are: (1) to introduce the \emph{sharpness} criteria in submodular optimization as a novel candidate explanation of the performance of the greedy algorithm; (2) to show that the standard greedy algorithm automatically adapt to the sharpness of the objective function, without requiring this information as part of the input; (3) to provide provable guarantees that depend only on the sharpness parameters; and (4) to empirically support our theoretical results with a detailed computational study in real-world applications.


\subsection{Problem Formulation}
In this work, we study the \emph{submodular function maximization problem subject to a single cardinality constraint}. 
Formally, consider a ground set of $n$ elements $V=\{1,\ldots, n\}$ and a non-negative set function $f:2^V\to\RR_+$. We denote the marginal value for any subset $A\subseteq V$ and $e\in V$ by $f_A(e):=f(A+e)-f(A)$, where $A+e:= A\cup\{e\}$. A set function $f$ is {\it submodular} if, and only if, it satisfies the {\it diminishing returns property}. Namely, for any $e\in V$ and $A\subseteq B\subseteq V\backslash\{e\}$, $f_A(e)\geq f_B(e)$.
We say that $f$ is \emph{monotone} if for any $A\subseteq B\subseteq V$, we have $f(A)\leq f(B)$. To ease the notation, we will write the value of singletons as $f(e) := f(\{e\})$.  For simplicity, we assume throughout this work that $f$ is \emph{normalized}, i.e., $f(\emptyset)=0$. Our results still hold when $f(\emptyset)\neq0$, but an additive extra term has to be carried over.

As we mentioned before, our work is mostly focused on the optimization of non-negative monotone submodular functions subject to a single cardinality constraint. In this setting, we are given a non-negative integer $k$ and the goal is to optimally select a subset $S$ that contains at most $k$ elements of $V$. Formally, the optimization problem is the following
\begin{equation}\label{eq:problem_def}
\max\{f(S): \ |S|\leq k\}. \tag{P$_1$}
\end{equation}
Throughout the rest of the paper, we denote the optimal value as $\OPT$. In this context, we assume the \emph{value oracle model}, i.e., the decision-maker queries the value of $S$ and the oracle returns $f(S)$. It is well known that \eqref{eq:problem_def} is NP-hard to solve exactly under the value oracle model. Therefore, most of the literature has focused on designing algorithms with provable guarantees. A natural approach is the standard \emph{greedy algorithm} which constructs a set by adding in each iteration the element with the best marginal value while maintaining feasibility \citep{fisher1978analysis}. The authors show that the greedy algorithm achieves a $1-1/e$ approximation factor for problem \eqref{eq:problem_def}, which is tight \citep{nemhauser1978hardness,feige1998threshold}. We give a detailed description of the related work in Section \ref{sec:related_work}. Even though the best possible guarantee is $1-1/e$, the standard greedy algorithm usually performs better in practice.

To explain this phenomenon we adapt the concept of \emph{sharpness} used in continuous optimization \citep{hoffman52,lojas63,polyak79,lojas93,bolte_etal07}. The notion of sharpness is also known as H\"{o}lderian error bound on the distance to the set of optimal solutions. Broadly speaking, this property characterizes the behavior of a function around the set of optimal solutions. To exemplify this property, consider a concave function $F$, a feasible region $X$, a set of optimal solutions $X^*=\argmax_{x\in X}F(x)$ and a distance function $d(\cdot, X^*):X \to\RR_+$. Then, $F$ is said to be $(c,\theta)$-sharp if for any $x\in X$
\[
F^* - F(x) \geq \left(\frac{d(x,X^*)}{c} \right)^{1/\theta},
\]
where $F^*=\max_{x\in X}F(x)$. Since we assumed $F$ is concave, we know that $\nabla F(x)\cdot(x^*-x)\geq F^* - F(x)$, therefore we have
\[
\nabla F(x)\cdot(x^*-x) \geq \left(\frac{d(x,X^*)}{c} \right)^{1/\theta}.
\]
Sharpness has been widely used to study convergence rates in convex and non-convex optimization, see e.g., \citep{nemirovski_nesterov85,karimi_etal16,bolte_etal14,roulet_daspremont17,kerdreux_etal18}. For a detailed review on the sharpness condition in continuous optimization, we refer the interested reader to \citep{roulet_daspremont17}.

\subsection{Our Contributions and Results}

Our main contribution is to introduce multiple concepts of sharpness in submodular optimization that mimic conditions previously studied in continuous optimization. We show that the greedy algorithm performs better than the worst-case guarantee $1-1/e$ for functions that are sharp with appropriate parameters. Empirically, we obtain improved guarantees on real data sets when using more refined conditions.

\subsubsection{Monotonic Sharpness}
Given parameters $c\geq 1$ and $\theta\in[0,1]$, we define monotonic sharpness as follows.
\begin{definition}[Monotonic Sharpness]\label{def:d_sharp}
A non-negative monotone submodular set function $f:2^V\to\RR_+$ is said to be \emph{$(c,\theta)$-monotonic sharp}, if there exists an optimal solution $S^*$ for Problem \eqref{eq:problem_def} such that for any subset $S\subseteq V$ with $|S|\leq k$ the function satisfies
\begin{equation}\label{eq:sharp_inequality}
\sum_{e\in S^*\backslash S} f_S(e)\geq \left(\frac{|S^*\backslash S|}{k\cdot c}\right)^{\frac{1}{\theta}}\cdot \OPT.
\end{equation}
\end{definition}
The property can be interpreted as implying that the optimal set $S^*$ is not just unique but any solution which differs significantly from $S^*$ has substantially lower value. Our first main result for Problem \eqref{eq:problem_def} is stated in the next theorem.

\begin{theorem}\label{theorem:main_card}
Consider a non-negative monotone submodular function $f:2^V\to\RR_+$ which is $(c,\theta)$-monotonic sharp. Then, the greedy algorithm returns a feasible set $S^g$ for \eqref{eq:problem_def} such that
\[f(S^g)\geq \left[1-\left(1-\frac{\theta}{c}\right)^{\frac{1}{\theta}}\right] \cdot f(S^*).\]
\end{theorem}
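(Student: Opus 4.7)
\medskip
\noindent\textbf{Proof sketch.}
My plan is to mimic the standard $1-1/e$ greedy analysis but to replace the per-step recursion by a stronger one that exploits sharpness. Let $S_i$ denote the greedy set after $i$ iterations, and write $\delta_i=\OPT-f(S_i)$ and $m_i=|S^*\setminus S_i|$. Note that if $\delta_i=0$ for some $i\le k$ then monotonicity yields $f(S^g)=\OPT$ and the theorem is immediate, so I may assume $\delta_i>0$ (which forces $m_i>0$, since $m_i=0$ would imply $S^*\subseteq S_i$ and hence $\delta_i=0$) throughout the analysis.

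The central claim is the per-iteration inequality
\[
f(S_{i+1})-f(S_i)\;\ge\;\frac{\delta_i^{1-\theta}\,\OPT^{\theta}}{k\,c}.
\]
Since greedy picks the element of largest marginal and the max is at least the average over $S^*\setminus S_i$, I have $f(S_{i+1})-f(S_i)\ge \frac{1}{m_i}\sum_{e\in S^*\setminus S_i} f_{S_i}(e)$. For the inner sum I combine two parallel lower bounds: submodularity and monotonicity give $\sum_{e\in S^*\setminus S_i} f_{S_i}(e)\ge f(S^*\cup S_i)-f(S_i)\ge\delta_i$, while the sharpness hypothesis gives $\sum_{e\in S^*\setminus S_i} f_{S_i}(e)\ge (m_i/(kc))^{1/\theta}\OPT$. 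Applying the weighted AM-GM inequality $\max(x,y)\ge x^{1-\theta}y^{\theta}$ with $x=\delta_i$ and $y=(m_i/(kc))^{1/\theta}\OPT$ then produces
\[
\sum_{e\in S^*\setminus S_i} f_{S_i}(e)\;\ge\;\delta_i^{1-\theta}\cdot\Big[(m_i/(kc))^{1/\theta}\OPT\Big]^{\theta}=\frac{m_i}{kc}\,\delta_i^{1-\theta}\,\OPT^{\theta},
\]
and dividing by $m_i$ delivers the target per-step bound.

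To finish, I linearize via the concavity of $x\mapsto x^{\theta}$ on $[0,\infty)$ (valid for $\theta\in(0,1]$), which gives $a^{\theta}-b^{\theta}\ge \theta a^{\theta-1}(a-b)$ whenever $a\ge b\ge 0$. Applied with $a=\delta_i$ and $b=\delta_{i+1}$, the per-step bound above becomes $\delta_i^{\theta}-\delta_{i+1}^{\theta}\ge \theta\,\OPT^{\theta}/(kc)$, since $\delta_i^{\theta-1}\cdot\delta_i^{1-\theta}=1$. Telescoping over $i=0,\dots,k-1$ yields $\OPT^{\theta}-\delta_k^{\theta}\ge \theta\,\OPT^{\theta}/c$, equivalently $\delta_k\le \OPT\,(1-\theta/c)^{1/\theta}$, and subtracting from $\OPT$ gives the advertised lower bound on $f(S^g)$.

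The only non-routine step is identifying the right per-step target $\delta_i^{1-\theta}\OPT^{\theta}/(kc)$: it is exactly the quantity obtained by taking the weighted geometric mean of the two natural lower bounds on $\sum_{e\in S^*\setminus S_i} f_{S_i}(e)$, and simultaneously the quantity whose $\theta$-th power linearizes under concavity so that telescoping produces the formula $1-(1-\theta/c)^{1/\theta}$. As sanity checks one verifies that the bound recovers $1-1/e$ in the limit $\theta\to 0^+$ with $c=1$, and exact optimality when $c=\theta=1$.
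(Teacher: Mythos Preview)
Your proof is correct, and arrives at the same per-step inequality
\[
f(S_{i+1})-f(S_i)\;\ge\;\frac{1}{kc}\,[\OPT-f(S_i)]^{1-\theta}\,\OPT^{\theta}
\]
that the paper obtains. The derivation differs only cosmetically: the paper rewrites sharpness as a lower bound on $1/|S^*\setminus S_i|$ and then multiplies by $\sum_{e\in S^*\setminus S_i}f_{S_i}(e)$, whereas you keep two parallel lower bounds on the sum and combine them via $\max(x,y)\ge x^{1-\theta}y^{\theta}$; both routes are one line.

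Where the two proofs genuinely diverge is in solving the recurrence. The paper guesses the closed form $b_i=\OPT\big[1-(1-\tfrac{\theta i}{ck})^{1/\theta}\big]$, shows that $h(x)=x+\tfrac{\OPT^{\theta}}{kc}[\OPT-x]^{1-\theta}$ is monotone on the relevant interval, and verifies $h(b_{i-1})\ge b_i$ by appealing to the convexity of $g(x)=(1-\tfrac{\theta x}{kc})^{1/\theta}$. Your argument is more elementary: the substitution $u_i=\delta_i^{\theta}$ together with the tangent-line inequality for the concave map $x\mapsto x^{\theta}$ turns the nonlinear recursion into the linear one $u_i-u_{i+1}\ge \theta\,\OPT^{\theta}/(kc)$, which telescopes directly. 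This avoids the guess, the monotonicity interval check, and the auxiliary convexity computation. The paper's route has the minor advantage of producing the intermediate bound $f(S_i)\ge b_i$ at every step (useful for the dynamic variants later in the paper), while your telescoping yields only the terminal bound but with less machinery.
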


%
%

We remark that \emph{any} monotone set function $f$ is $(c,\theta)$-monotonic sharp as $c\gets 1$ and $\theta \gets 0$ (Lemma \ref{lemma:sharpness_facts} in Section \ref{sec:discrete_sharpness}). Corollary \ref{corollary:recovery} in Section \ref{sec:discrete_sharpness} shows that the guarantee $1-\left(1-\theta/c\right)^{1/\theta}$ tends to $1-1/e$ when $(c,\theta)$ goes to $(1,0)$, recovering the classical guarantee for any monotone submodular function~\citep{nemhauser1978analysis}. However, if the parameters $(c,\theta)$ are bounded away from $(1,0)$, we obtain a strict improvement over this worst case guarantee. In Section~\ref{sec:experiments}, we show experimental results to illustrate that real data sets do show improved parameters. In Section \ref{sec:discrete_sharpness}, we also discuss the sharpness of simple submodular functions such as linear and concave over linear functions.

Definition \ref{def:d_sharp} can be considered as a \emph{static} notion of sharpness, since parameters $c$ and $\theta$ do not change with respect to the size of $S$. We generalize this definition by considering the notion of \emph{dynamic monotonic sharpness}, in which the parameters $c$ and $\theta$ depend on the size of the feasible sets, i.e., $c_{|S|}\geq 1$ and $\theta_{|S|}\in[0,1]$. This allows us to obtain improved guarantees for the greedy algorithm based on how the monotonic sharpness changes dynamically. Formally, we define dynamic sharpness as follows.

\begin{definition}[Dynamic Monotonic Sharpness]\label{def:dynamic_sharp}
A non-negative monotone submodular function $f:2^V\to\RR_+$ is said to be \emph{dynamic $(c,\theta)$-monotonic sharp}, where $c=(c_0,c_1,\ldots,c_{k-1})\in[1,\infty)^k$ and $\theta=(\theta_0,\theta_1,\ldots,\theta_{k-1})\in[0,1]^k$, if there exists an optimal solution $S^*$ for Problem \eqref{eq:problem_def} such that for any subset $S\subseteq V$ with $|S|\leq k$ the function satisfies
\begin{align*}
\sum_{e\in S^*\backslash S} f_S(e)\geq \left(\frac{|S^*\backslash S|}{k\cdot c_{|S|}}\right)^{\frac{1}{\theta_{|S|}}}\cdot f(S^*).
\end{align*}
\end{definition}

In other words, we say that a function $f$ is $(c_i,\theta_i)$-monotonic sharp for any subset such that $|S| = i$, where $i\in\{0,\ldots,k-1\}$. Note that since we have $k$ pairs of parameters $(c_i,\theta_i)$, then there are at most $k-1$ intervals in which sharpness may change. If the parameters are identical in every interval, then we recover Definition \ref{def:d_sharp} of monotonic sharpness. We obtain the following guarantee for dynamic monotonic sharpness.

\begin{theorem}\label{theorem:general_approx}
 Consider a non-negative monotone submodular function $f:2^V\to\RR_+$ that is dynamic $(c,\theta)$-sharp with parameters $c=(c_0,c_1,\ldots,c_{k-1})\in[1,\infty)^k$ and $\theta=(\theta_0,\theta_1,\ldots,\theta_{k-1})\in[0,1]^k$. Then, the greedy algorithm returns a set $S^g$ for \eqref{eq:problem_def} such that
\begin{equation*}
f(S^g)\geq \Bigg[1 - \Bigg(\bigg(\Big(1-\frac{\theta_0}{c_0k}\Big)^{\frac{\theta_1}{\theta_0}} - \frac{\theta_1}{c_1k}\bigg)^{\frac{\theta_2}{\theta_1}} -  \cdots - \frac{\theta_{k-1}}{c_{k-1}k}\Bigg)^{\frac{1}{\theta_{k-1}}}\Bigg] \cdot f(S^*).
\end{equation*}
\end{theorem}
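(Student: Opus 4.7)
The plan is to follow the classical greedy tracking argument, but with a potential function whose exponent is allowed to change from step to step in lock-step with the sharpness parameters $(c_i,\theta_i)$. Let $S_i$ denote the greedy set after $i$ iterations, $e_{i+1}$ the element chosen next, and $\phi_i := f(S^*) - f(S_i)$ the residual gap. Introduce the auxiliary sequence $v_1 := 1 - \theta_0/(c_0 k)$ and $v_{i+1} := v_i^{\theta_i/\theta_{i-1}} - \theta_i/(c_i k)$ for $i \geq 1$; unfolding this recurrence through $k$ steps produces exactly the nested expression appearing in the theorem statement. The goal is to establish the invariant
$$\bigl(\phi_i/f(S^*)\bigr)^{\theta_{i-1}} \;\leq\; v_i \qquad \text{for all } i \in \{1,\ldots,k\},$$
and then specialize at $i=k$ and raise to the $1/\theta_{k-1}$-th power to obtain $\phi_k \leq v_k^{1/\theta_{k-1}} f(S^*)$, which rearranges to the stated bound.

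The engine of the induction is a per-iteration lower bound on the greedy gain, namely
$$f_{S_i}(e_{i+1}) \;\geq\; \frac{\phi_i^{\,1-\theta_i}\, f(S^*)^{\theta_i}}{c_i\, k}.$$
To establish this, I would bound the sum $\Sigma_i := \sum_{e \in S^* \setminus S_i} f_{S_i}(e)$ in two different ways. Submodularity yields $\Sigma_i \geq \phi_i$, while dynamic sharpness applied at $S_i$ yields $\Sigma_i \geq (m/(c_i k))^{1/\theta_i} f(S^*)$, where $m := |S^* \setminus S_i| \leq k$. Since $\Sigma_i$ majorizes both quantities, writing $\Sigma_i = \Sigma_i^{1-\theta_i}\cdot\Sigma_i^{\theta_i}$ and applying the submodularity bound to the first factor and the sharpness bound to the second gives $\Sigma_i \geq \phi_i^{1-\theta_i}\bigl[(m/(c_ik))^{1/\theta_i} f(S^*)\bigr]^{\theta_i} = m\cdot\phi_i^{1-\theta_i} f(S^*)^{\theta_i}/(c_i k)$. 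The greedy rule guarantees $f_{S_i}(e_{i+1}) \geq \max_{e \in S^* \setminus S_i} f_{S_i}(e) \geq \Sigma_i/m$, and dividing by $m$ cancels the unknown intersection size, producing the stated uniform bound.

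The inductive step then falls out of concavity of $x \mapsto x^{\theta_i}$: for $\theta_i \in (0,1]$ and $0 \leq y \leq \phi_i$, the tangent-line inequality reads $(\phi_i - y)^{\theta_i} \leq \phi_i^{\theta_i} - \theta_i \phi_i^{\theta_i-1} y$. Substituting $y = f_{S_i}(e_{i+1})$ and applying the per-step bound cancels the $\phi_i^{\theta_i-1}\cdot\phi_i^{1-\theta_i}$ factors, leaving
$$\bigl(\phi_{i+1}/f(S^*)\bigr)^{\theta_i} \;\leq\; \bigl(\phi_i/f(S^*)\bigr)^{\theta_i} - \theta_i/(c_i k).$$
Instantiated at $i=0$ this already proves the base case $(\phi_1/f(S^*))^{\theta_0} \leq v_1$. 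For the inductive step at $i \geq 1$, I would raise the hypothesis $(\phi_i/f(S^*))^{\theta_{i-1}} \leq v_i$ to the positive exponent $\theta_i/\theta_{i-1}$, obtaining $(\phi_i/f(S^*))^{\theta_i} \leq v_i^{\theta_i/\theta_{i-1}}$, and then chain with the display above to deduce $(\phi_{i+1}/f(S^*))^{\theta_i} \leq v_i^{\theta_i/\theta_{i-1}} - \theta_i/(c_i k) = v_{i+1}$, closing the induction.

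The main obstacle will be identifying the right form of the per-step bound in the second paragraph. Using only sharpness combined with the max-greater-than-average step yields $f_{S_i}(e_{i+1}) \geq m^{1/\theta_i - 1}/(c_i k)^{1/\theta_i}\cdot f(S^*)$, which degrades when $m$ is small; using only submodularity recovers the classical $\phi_i/k$ with no benefit from sharpness. The weighted geometric combination, with weights $1-\theta_i$ and $\theta_i$ matched exactly to the sharpness exponent, is what removes the $m$-dependence and produces the $\phi_i^{1-\theta_i} f(S^*)^{\theta_i}$ shape that interacts cleanly with the tangent-line inequality and telescopes across the varying parameters. Boundary cases $\theta_i = 0$ or $\theta_{i-1} = 0$ can be accommodated by continuity, since $(c,\theta)=(1,0)$-sharpness is vacuous and the bound reduces to the classical $1-1/e$ guarantee in that limit.
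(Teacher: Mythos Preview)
Your argument is correct and reaches the same per-step recurrence as the paper, namely $f(S_{i+1})-f(S_i)\geq \phi_i^{\,1-\theta_i}f(S^*)^{\theta_i}/(c_ik)$; your geometric-mean derivation (writing $\Sigma_i=\Sigma_i^{1-\theta_i}\Sigma_i^{\theta_i}$ and bounding the two factors separately) is equivalent to the paper's manipulation that rewrites the sharpness inequality as $1/|S^*\setminus S_i|\geq f(S^*)^{\theta_i}(kc_i)^{-1}\Sigma_i^{-\theta_i}$ and then invokes submodularity on $\Sigma_i$.

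Where you and the paper diverge is in how the recurrence is unwound. The paper works directly with $a_i=f(S_i)$, posits an explicit target sequence $b_i$, and proves $a_i\geq b_i$ by showing that the one-step map $h(x)=x+\tfrac{1}{kc_i}(1-x)^{1-\theta_i}$ is monotone on an interval containing $b_i$, together with a convexity inequality for $g(x)=(\cdots-\tfrac{\theta_i}{c_ik}(x-i))^{1/\theta_i}$ that yields $h(b_i)\geq b_{i+1}$. You instead track the transformed potential $(\phi_i/f(S^*))^{\theta_{i-1}}$ and use the tangent-line inequality for the concave map $x\mapsto x^{\theta_i}$ to obtain the clean additive decrement $(\phi_{i+1}/f(S^*))^{\theta_i}\leq(\phi_i/f(S^*))^{\theta_i}-\theta_i/(c_ik)$, after which an exponent change $\theta_i/\theta_{i-1}$ chains the steps. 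The two are dual: the paper exploits convexity of $t\mapsto t^{1/\theta_i}$ on the $b$-side, you exploit concavity of $t\mapsto t^{\theta_i}$ on the $\phi$-side. Your route is shorter and, notably, sidesteps the interval-membership check $b_i\in I$ that the paper only verifies ``for sufficiently large $k$''; in that sense your argument is slightly tighter. The only points to keep explicit in a final write-up are that $\phi_i\geq 0$ by optimality of $S^*$ (so the tangent-line step is valid) and that the $v_i$ remain nonnegative whenever the nested expression in the statement is well-defined.
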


In Section~\ref{sec:experiments}, we empirically show that the guarantees shown in Theorem \ref{theorem:general_approx} strictly outperforms the factors provided by Theorem \ref{theorem:main_card}. 

\subsubsection{Submodular Sharpness}

Definition \ref{def:d_sharp} measures the distance between a feasible set $S$ and $S^*$ as the cardinality of its difference. However, this distance may not be a precise measure, since the value $f(S)$ could be quite close to $\OPT$. Therefore, we introduce the concept of \emph{submodular sharpness} as a natural generalization of Definition \ref{def:d_sharp}. Given parameters $c\geq 1$ and $\theta\in[0,1]$, we define submodular sharpness as follows
\begin{definition}[Submodular Sharpness]\label{def:total_d_sharp}
A non-negative monotone submodular function $f:2^V\to\RR_+$ is said to be \emph{$(c,\theta)$-submodular sharp}, if there exists an optimal solution $S^*$ for Problem \eqref{eq:problem_def} such that for any subset $S\subseteq V$ with $|S|\leq k$ the function satisfies
\begin{equation}\label{eq:total_sharp_inequality}
\max_{e \in S^*\backslash S} f_S(e) \geq \frac{1}{kc}\left[f(S^*) - f(S) \right]^{1-\theta}\OPT^\theta
\end{equation}
\end{definition}

This inequality can be interpreted as the submodular version of the Polyak-\L{}ojasiewicz inequality \citep{polyak63,lojas63} with the $\ell_\infty$-norm. We observe that any non-negative monotone submodular function that is $(c,\theta)$-monotonic sharp is also $(c,\theta)$-submodular sharp (see Lemma \ref{lemma:subm_sharpness_facts} in Section \ref{sec:s_sharpness}). We obtain the following main result for submodular sharpness.

\begin{theorem}\label{theorem:total_sharp}
Consider a non-negative monotone submodular function $f:2^V\to\RR_+$ which is $(c,\theta)$-submodular sharp. Then, the greedy algorithm returns a feasible set $S^g$ for \eqref{eq:problem_def} such that
\[f(S^g)\geq \left[1-\left(1-\frac{\theta}{c}\right)^{\frac{1}{\theta}}\right] \cdot f(S^*).\]
\end{theorem}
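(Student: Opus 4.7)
The plan is to run the standard greedy potential argument, but calibrated to the submodular sharpness inequality rather than the classical ``marginal $\ge \tfrac{1}{k}(\OPT - f(S))$'' bound. Let $S_0 = \emptyset \subset S_1 \subset \cdots \subset S_k = S^g$ be the sequence of greedy iterates, and write $\delta_i := \OPT - f(S_i)$ and the normalized gap $r_i := \delta_i / \OPT \in [0,1]$. The base case is $r_0 = 1$ since $f(\emptyset) = 0$.

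The first key step is to combine the greedy choice rule with submodular sharpness. If $r_{i+1} > 0$ then in particular $S^* \not\subseteq S_i$, so Definition~\ref{def:total_d_sharp} provides an element $e^\star \in S^* \setminus S_i$ with $f_{S_i}(e^\star) \geq \tfrac{1}{kc}\delta_i^{\,1-\theta}\OPT^{\theta}$. Greedy picks the element of largest marginal value, so the same lower bound holds for the gain at step $i+1$, giving the one-step recursion
\[
\delta_i - \delta_{i+1} \;\geq\; \tfrac{1}{kc}\,\delta_i^{\,1-\theta}\,\OPT^{\theta}, \qquad \text{equivalently} \qquad r_{i+1} \;\leq\; r_i - \tfrac{1}{kc}\,r_i^{\,1-\theta}.
\]

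The second step is to convert this recursion into a closed-form bound by tracking the potential $r_i^{\theta}$ and proving by induction on $i$ that $r_i^{\theta} \leq 1 - \theta i / (kc)$. Assuming this for step $i$ (and $r_{i+1}>0$, otherwise the claim is trivial because the right-hand side stays $\geq 1 - \theta/c \geq 0$ for $i+1 \leq k$), raise the recursion to the $\theta$-th power:
\[
r_{i+1}^{\theta} \;\leq\; r_i^{\theta}\bigl(1 - \tfrac{1}{kc}\,r_i^{-\theta}\bigr)^{\theta}.
\]
Because $r_{i+1} > 0$ forces $r_i^{\theta} > 1/(kc)$, the quantity $\tfrac{1}{kc}\,r_i^{-\theta}$ lies in $[0,1]$, so Bernoulli's inequality $(1-x)^{\theta} \leq 1 - \theta x$ (valid for $\theta \in [0,1]$ and $x \in [0,1]$) yields $r_{i+1}^{\theta} \leq r_i^{\theta} - \theta/(kc)$, closing the induction. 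Setting $i = k$ gives $r_k^{\theta} \leq 1 - \theta/c$, i.e.\ $f(S^g)/\OPT \geq 1 - (1 - \theta/c)^{1/\theta}$, which is the claim.

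The main subtlety, and the only place where one has to be careful, is justifying that the ``raise to the $\theta$'' step is lossless and that Bernoulli's inequality actually applies. This requires separating the regime where greedy has already captured $\OPT$ (so $r_{i+1} = 0$ and the target bound is trivial since $1 - \theta(i+1)/(kc) \geq 1 - \theta/c \geq 0$) from the regime $r_{i+1} > 0$, where $r_i^{\theta} > 1/(kc)$ automatically makes the Bernoulli argument valid. The boundary case $\theta = 0$ is not treated by the induction directly, but the statement in that case is the classical $1-1/e$ bound, recovered via the limit $\lim_{\theta \to 0}(1-\theta/c)^{1/\theta} = e^{-1/c}$, exactly as in Corollary~\ref{corollary:recovery}.
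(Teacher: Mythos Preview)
Your argument is correct. The overall route matches the paper's: combine the greedy choice with the submodular sharpness inequality to obtain the one-step recursion
\[
f(S_i)-f(S_{i-1}) \;\geq\; \tfrac{1}{kc}\,[f(S^*)-f(S_{i-1})]^{1-\theta}\,f(S^*)^{\theta},
\]
and then solve this recursion. Where you differ is in how the recursion is solved. The paper (via the proof of Theorem~\ref{theorem:main_card}) constructs the explicit comparison sequence $b_i = \OPT\cdot\bigl[1-(1-\tfrac{\theta i}{kc})^{1/\theta}\bigr]$, verifies that each $b_i$ lies in the interval where $h(x)=x+\tfrac{\OPT^\theta}{kc}[\OPT-x]^{1-\theta}$ is increasing, and then appeals to the convexity of $g(x)=(1-\tfrac{\theta x}{kc})^{1/\theta}$ to show $h(b_{i-1})\geq b_i$. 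You instead track the potential $r_i^{\theta}$ and telescope via Bernoulli's inequality $(1-x)^{\theta}\leq 1-\theta x$, which is the concavity of $t\mapsto t^{\theta}$ and is essentially the dual of the paper's convexity step. Your version is a bit more streamlined: it avoids the separate verification that $b_i$ stays inside the monotonicity interval of $h$, and the case split ``$r_{i+1}=0$ vs.\ $r_{i+1}>0$'' cleanly handles the boundary where greedy has already reached $\OPT$. The paper's formulation, on the other hand, generalizes more directly to the dynamic and approximate variants (Theorems~\ref{theorem:general_approx} and~\ref{theorem:main_card2}), where the explicit $b_i$ and the function $h$ are reused with varying parameters.
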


Observe that \emph{any} monotone submodular function $f$ is $(c,\theta)$-submodular sharp as $c\gets 1$ and $\theta \gets 0$ (see Lemma \ref{lemma:subm_sharpness_facts} in Section \ref{sec:s_sharpness}). Thus, Theorem \ref{theorem:total_sharp} recovers the classical guarantee $1-1/e$~\citep{nemhauser1978analysis}. Since Definition \ref{def:total_d_sharp} is weaker than Definition \ref{def:d_sharp}, the approximation factor guaranteed by Theorem \ref{theorem:total_sharp} is at least as good as in Theorem \ref{theorem:main_card}. More importantly, we will empirically show in Section \ref{sec:experiments} that there is a significant improvement in the approximation guarantees when using Definition \ref{def:total_d_sharp}.

Finally, similar to dynamic monotonic sharpness, we introduce the concept of \emph{dynamic submodular sharpness}. 

\begin{definition}[Dynamic Submodular Sharpness]\label{def:total_dynamic_sharp}
A non-negative monotone submodular function $f:2^V\to\RR_+$ is said to be \emph{dynamic $(c,\theta)$-submodular sharp}, where $c=(c_0,c_1,\ldots,c_{k-1})\in[1,\infty)^k$ and $\theta=(\theta_0,\theta_1,\ldots,\theta_{k-1})\in[0,1]^k$, if there exists an optimal solution $S^*$ for Problem \eqref{eq:problem_def} such that for any subset $S\subseteq V$ with $|S|\leq k$ the function satisfies
\begin{equation}\label{eq:dynamic_subm_sharp_inequality}
\max_{e \in S^*\backslash S} f_S(e) \geq \frac{1}{kc_{|S|}}\left[f(S^*) - f(S)\right]^{1-\theta_{|S|}}f(S^*)^{\theta_{|S|}}
\end{equation}
\end{definition}

Finally, we obtain the following result for Problem \eqref{eq:problem_def}.

\begin{theorem}\label{theorem:general_approx2}
 Consider a non-negative monotone submodular function $f:2^V\to\RR_+$ that is dynamic $(c,\theta)$-submodular sharp with parameters $c=(c_0,c_1,\ldots,c_{k-1})\in[1,\infty)^k$ and $\theta=(\theta_0,\theta_1,\ldots,\theta_{k-1})\in[0,1]^k$. Then, the greedy algorithm returns a set $S^g$ for \eqref{eq:problem_def} such that
\begin{equation*}
f(S^g)\geq \Bigg[1 - \Bigg(\bigg(\Big(1-\frac{\theta_0}{c_0k}\Big)^{\frac{\theta_1}{\theta_0}} - \frac{\theta_1}{c_1k}\bigg)^{\frac{\theta_2}{\theta_1}} -  \cdots - \frac{\theta_{k-1}}{c_{k-1}k}\Bigg)^{\frac{1}{\theta_{k-1}}}\Bigg] \cdot f(S^*).
\end{equation*}
\end{theorem}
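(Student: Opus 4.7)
The plan is to mirror the greedy analysis used for the static submodular-sharp case (Theorem \ref{theorem:total_sharp}), but to allow the constants to change from one greedy step to the next. Let $S_0\subseteq S_1\subseteq\cdots\subseteq S_k$ denote the greedy iterates and set $\alpha_i:=1-f(S_i)/f(S^*)$, so $\alpha_0=1$ and $\alpha_k$ controls the approximation ratio via $f(S^g)=(1-\alpha_k)f(S^*)$. Since greedy picks the element with maximum marginal value and $S^*\setminus S_i\subseteq V\setminus S_i$, the marginal gain at step $i+1$ is at least $\max_{e\in S^*\setminus S_i}f_{S_i}(e)$, and applying the dynamic submodular sharpness hypothesis at $S=S_i$ and dividing by $f(S^*)$ yields
\[\alpha_{i+1}\;\leq\;\alpha_i-\frac{\alpha_i^{\,1-\theta_i}}{k\,c_i}, \qquad i=0,1,\ldots,k-1.\]

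To linearize this fractional recursion I would invoke the concavity of $t\mapsto t^{\theta_i}$ on $[0,\infty)$, which provides the tangent bound $(a-b)^{\theta_i}\leq a^{\theta_i}-\theta_i\,a^{\theta_i-1}b$ for $0\leq b\leq a$. Substituting $a=\alpha_i$ and $b=\alpha_i^{\,1-\theta_i}/(k c_i)$, the $\alpha_i$-factors cancel and I obtain the clean recursion
\[\alpha_{i+1}^{\theta_i}\;\leq\;\alpha_i^{\theta_i}-\frac{\theta_i}{k\,c_i}.\]
In the borderline case $b>a$, the raw recursion forces $\alpha_{i+1}=0$, after which the greedy value already equals $f(S^*)$ and the theorem's conclusion holds trivially, so from here on I may assume $b\leq a$.

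The final step is to chain this per-step bound across $i=0,\ldots,k-1$, where the main subtlety is that the exponent on $\alpha_i$ shifts between iterations. I would handle this by using the identity $\alpha_i^{\theta_i}=\bigl(\alpha_i^{\theta_{i-1}}\bigr)^{\theta_i/\theta_{i-1}}$, valid since $\alpha_i\in[0,1]$, to convert an upper bound on $\alpha_i^{\theta_{i-1}}$ produced at the previous step into an upper bound on $\alpha_i^{\theta_i}$ that can be fed into the current step. A routine induction on $i$ then produces exactly the nested expression appearing in the theorem statement for $\alpha_k^{\theta_{k-1}}$, after which raising to the $1/\theta_{k-1}$ power yields the stated bound on $f(S^g)$. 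The only genuine obstacle is notational: carefully tracking the telescoping of nested powers so that the resulting expression matches the statement, and verifying the concavity step (including the borderline case above). The argument is otherwise parallel to the analyses of Theorems \ref{theorem:main_card}, \ref{theorem:general_approx}, and \ref{theorem:total_sharp}.
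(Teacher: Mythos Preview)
Your proposal is correct and rests on the same convexity/concavity fact as the paper, but the packaging is genuinely cleaner than what the paper does. The paper reduces Theorem~\ref{theorem:general_approx2} to the recurrence of Theorem~\ref{theorem:general_approx} and then, working with $a_i=f(S_i)$, explicitly defines the target sequence $b_i$ (the nested expression), introduces the one-step map $h(x)=x+\frac{1}{kc_i}(1-x)^{1-\theta_i}$, checks that $h$ is monotone on an appropriate interval containing $b_i$, and verifies $h(b_i)\geq b_{i+1}$ using convexity of $g(x)=(\,\cdot\,)^{1/\theta_i}$. Your route, by passing to the complementary variable $\alpha_i=1-a_i/f(S^*)$ and applying the tangent bound for the concave map $t\mapsto t^{\theta_i}$, collapses all of that into the single linearized step $\alpha_{i+1}^{\theta_i}\leq\alpha_i^{\theta_i}-\theta_i/(kc_i)$, after which the nested expression falls out by telescoping through the exponent changes $\theta_{i-1}\to\theta_i$. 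The two arguments are dual (convexity of $s\mapsto s^{1/\theta}$ versus concavity of $t\mapsto t^{\theta}$), so neither gives a sharper constant, but yours avoids the auxiliary functions $h,g$ and the interval check entirely. One small point worth making explicit when you write it up: in the chaining step you raise an \emph{upper bound} on $\alpha_i^{\theta_{i-1}}$ to the power $\theta_i/\theta_{i-1}$, so you need that bound to be nonnegative; if it ever goes negative the inequality $\alpha_i^{\theta_{i-1}}\leq(\text{negative})$ forces $\alpha_i=0$ and the conclusion is immediate, just as in your handling of the $b>a$ case. The paper glosses over the analogous issue with a ``for sufficiently large $k$'' remark.
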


We prove Theorems \ref{theorem:total_sharp} and \ref{theorem:general_approx2} in Section \ref{sec:s_sharpness}. 

\paragraph{Experimental Results.} In Section \ref{sec:experiments}, we provide a computational study in real-world applications such as movie recommendation, non-parametric learning, and clustering. Our objective is to experimentally contrast our theoretical results with the existing literature, such as the concepts of \emph{curvature} \citep{conforti_etal84} and \emph{submodular stability} \citep{chatzia_etal17}. While all these results try to explain the improved performance of greedy, sharpness provides an alternate explanation for this improved behavior.

\subsection{Related Work}\label{sec:related_work}

As remarked earlier, the greedy algorithm gives a $(1-\frac1e)$-approximation for maximizing a submodular function subject to a cardinality constraint~\citep{nemhauser1978analysis} and is optimal~\citep{feige1998threshold,nemhauser1978hardness}.

 The concept of \emph{curvature}, introduced in \citep{conforti_etal84}, measures how far the function is from being linear. Formally, a monotone submodular function $f:2^V\to\RR_+$ has \emph{total curvature} $\gamma \in [0,1]$ if
\begin{equation}\label{eq:curvature}
\gamma = 1 - \min_{e\in V^*}\frac{f_{V-e}(e)}{f(e)},
\end{equation}
where $V^* = \{e \in V: \ f(e)>0\}$. For a submodular function with total curvature $\gamma\in[0,1]$, \citet{conforti_etal84} showed that the greedy algorithm guarantees an approximation factor of $(1-e^{-\gamma})/\gamma$.  
The notion of curvature has been also used when minimizing submodular functions \citep{iyer_etal13}, and the equivalent notion of \emph{steepness} in supermodular function minimization \citep{ilev01}; as well as maximizing submodular functions under general combinatorial constraints~\citep{conforti_etal84,sviridenko_04,sviridenko_etal15,feldman18}; we refer to the interested reader to the literature therein for more details.

\paragraph{Stability.} Close to our setting is the concept of stability widely studied in discrete optimization. Broadly speaking, there are instances in which the unique optimal solution still remains unique even if the objective function is slightly perturbed. For example, the concept of \emph{clusterability} has been widely studied in order to show the existence of \emph{easy instances} in clustering \citep{balcan_etal09,daniely_etal12}, we refer the interested reader to the survey \citep{ben_david15a} for more details. Stability has been also studied in other contexts such as influence maximization \citep{he_etal14}, Nash equilibria \citep{balcan_etal17}, and Max-Cut \citep{bilu_linial12}. Building on \citep{bilu_linial12}, the concept of stability under multiplicative perturbations in submodular optimization is studied in \citep{chatzia_etal17}. Formally, given a non-negative monotone submodular function $f$,  $\tilde{f}$ is a $\gamma$-perturbation if: (1) $\tilde{f}$ is non-negative monotone submodular; (2) $f\leq \tilde{f}\leq \gamma\cdot f$; and (3) for any $S\subseteq V$ and $e\in V\backslash S$, $0\leq \tilde{f}_S(e) - f_S(e)\leq (\gamma-1)\cdot f(e)$. Now, assume we have an instance of problem \eqref{eq:problem_def} with a unique optimal solution, then this instance is said to be $\gamma$-stable if for any $\gamma$-perturbation of the objective function, the original optimal solution remains being unique. \citet{chatzia_etal17} show that the greedy algorithm recovers the unique optimal solution for 2-stable instances. However, it is not hard to show that 2-stability for problem  \eqref{eq:problem_def} is a strong assumption, since 2-stable instances can be easily solved by maximizing the sum of singleton values and thus $2$-stable functions do not capture higher order relationship among elements. 
%

\section{Analysis of Monotonic Sharpness}\label{sec:discrete_sharpness}
In this section, we focus on the analysis of the standard greedy algorithm for Problem \eqref{eq:problem_def} when the objective function is $(c,\theta)$-monotone sharp. First, let us prove the following basic facts:
\begin{lemma}\label{lemma:sharpness_facts}
Consider any monotone set function $f:2^V\to\RR_+$. Then,
\begin{enumerate}
\item There is always a set of parameters $c$ and $\theta$ such that $f$ is $(c,\theta)$-monotonic sharp. In particular, $f$ is always $(c,\theta)$-monotonic sharp when both $c\to1$ and $\theta\to0$.
\item If $f$ is $(c,\theta)$-monotonic sharp, then for any $c'\geq c$ and $\theta'\leq \theta$, $f$ is $(c',\theta')$-monotonic sharp. Therefore, in order to maximize the guarantee of Theorem \ref{theorem:main_card} we look for the smallest feasible $c$ and the largest feasible $\theta$.
\item If $f$ is also submodular, then Inequality \eqref{eq:sharp_inequality} needs to be checked only for sets of size exactly $k$.
\end{enumerate}
\end{lemma}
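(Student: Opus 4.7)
The lemma packages three routine properties of Definition 1 that I would verify in order, leaving only a mild padding step nontrivial.

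\textbf{Part 1.} The plan is to make the right-hand side of the sharpness inequality trivially vanish. Since $|S^*\setminus S|\leq k$, the base $|S^*\setminus S|/(kc)$ is at most $1/c$. Choosing $c>1$ strictly, this base is strictly less than $1$, so $(1/c)^{1/\theta}\to 0$ as $\theta\to 0^+$. Because the left-hand side is a sum of non-negative marginals, the inequality holds trivially in this limit, establishing $(c,\theta)$-monotonic sharpness for parameters arbitrarily close to $(1,0)$.

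\textbf{Part 2.} For $c'\geq c$ and $\theta'\leq \theta$, I would compare the two right-hand sides directly. The base $|S^*\setminus S|/(kc')\leq |S^*\setminus S|/(kc)\leq 1$, using $|S^*\setminus S|\leq k$ and $c'\geq c\geq 1$; and raising a number in $[0,1]$ to the larger exponent $1/\theta'\geq 1/\theta$ only shrinks it. Hence
\[
\left(\frac{|S^*\setminus S|}{kc'}\right)^{1/\theta'}\;\leq\;\left(\frac{|S^*\setminus S|}{kc}\right)^{1/\theta},
\]
so $(c,\theta)$-sharpness immediately implies $(c',\theta')$-sharpness.

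\textbf{Part 3.} Here the plan is to use submodularity to lift a check at $S$ with $|S|<k$ to a check at a size-$k$ set $T\supseteq S$ chosen to leave $S^*\setminus S$ untouched. After padding $V$ with dummy elements of zero marginal value (which preserves monotonicity, submodularity, $\OPT$, and $k$), I may assume $|V\setminus(S\cup S^*)|\geq k-|S|$. Pick $X\subseteq V\setminus(S\cup S^*)$ of size $k-|S|$ and set $T=S\cup X$; then $|T|=k$ and $S^*\setminus T=S^*\setminus S$. Submodularity yields $f_T(e)\leq f_S(e)$ for every $e\in S^*\setminus S$, so summing and invoking the assumed inequality at the size-$k$ set $T$ gives
\[
\sum_{e\in S^*\setminus S} f_S(e)\;\geq\;\sum_{e\in S^*\setminus T} f_T(e)\;\geq\;\left(\frac{|S^*\setminus T|}{kc}\right)^{1/\theta}\OPT\;=\;\left(\frac{|S^*\setminus S|}{kc}\right)^{1/\theta}\OPT,
\]
which is exactly the sharpness inequality at $S$.

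The only subtle step is justifying the dummy-element padding in part 3, which is standard (extend $f$ from $2^V$ to $2^{V'}$ by $f(A):=f(A\cap V)$); the rest is direct manipulation of Definition 1.
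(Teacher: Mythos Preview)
Parts 1 and 2 are correct and match the paper's arguments essentially verbatim.

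For Part 3, your core reduction---extend $S$ to a size-$k$ superset $T$ using only elements outside $S^*$, so that $S^*\setminus T=S^*\setminus S$, and then invoke diminishing returns termwise---is exactly what the paper's one-line proof intends. The gap is in the padding step. The hypothesis you are reducing \emph{to} is that inequality~\eqref{eq:sharp_inequality} holds for every size-$k$ subset of the \emph{original} ground set $V$; once $T$ contains dummy elements, $T\not\subseteq V$ and you cannot invoke that hypothesis at $T$. Because dummies contribute zero marginal value, the inequality at such a $T$ is identical to the inequality at $T\cap V$, a set of size strictly less than $k$---precisely the case you are trying to derive---so the argument is circular. This is not a cosmetic issue: without a size assumption on $V$ the claim is actually false. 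Take $V=\{1,2,3\}$, $k=2$, $S^*=\{1,2\}$, and the monotone submodular $f$ with $f(\{1\})=f(\{2\})=\tfrac12$, $f(\{3\})=\tfrac14$, $f(\{1,2\})=f(\{1,2,3\})=1$, $f(\{1,3\})=f(\{2,3\})=\tfrac58$; for $(c,\theta)=(1,\tfrac12)$ all three size-$2$ checks pass, yet at $S=\{3\}$ the left-hand side equals $\tfrac34$ while the right-hand side equals $1$. The paper's own proof glosses over this edge case too; the statement holds under the tacit assumption $|V\setminus S^*|\geq k$ (equivalently $n\geq 2k$ when $|S^*|=k$), and under that assumption your argument goes through with no padding needed.
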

\begin{proof}
\begin{enumerate}
\item Note that $\frac{|S^*\backslash S|}{k} \leq 1$, so
\(\left(\frac{|S^*\backslash S|}{k\cdot c}\right)^{\frac{1}{\theta}}\leq \left(\frac{1}{c}\right)^{\frac{1}{\theta}}, \)
which shows that $\left(\frac{|S^*\backslash S|}{k\cdot c}\right)^{\frac{1}{\theta}}\to 0$ when $c\to 1$ and $\theta \to 0$. Therefore, Definition \ref{def:d_sharp} is simply
\(\sum_{e\in S^*\backslash S} f_S(e)\geq 0,\)
which is satisfied since from monotonicity we have $f_S(e)\geq 0$.
\item Observe that $\left(\frac{|S^*\backslash S|}{k\cdot c}\right)^{\frac{1}{\theta}}$ as a function of $c$ and $\theta$ is increasing in $\theta$ and decreasing in $c$. Therefore, $\left(\frac{|S^*\backslash S|}{k\cdot c}\right)^{\frac{1}{\theta}}\geq \left(\frac{|S^*\backslash S|}{k\cdot c'}\right)^{\frac{1}{\theta'}}$ for  $c'\geq c$ and $\theta'\leq \theta$.
\item Given that $f$ is submodular, then $f_A(e)\geq f_B(e)$ for any $e\in V$ and $A\subseteq B\subseteq V\backslash e$, so in \eqref{eq:sharp_inequality} we can assume without loss of generality that $|S| = k$.
\end{enumerate}
\end{proof}

We are ready to prove Theorem \ref{theorem:main_card}.  We emphasize that the greedy algorithm automatically adapts to the sharpness of the function and does not require explicit access to the sharpness parameters in order to obtain the desired guarantees. For completeness, let us recall the standard greedy algorithm.
\begin{algorithm}[h!]
\caption{Greedy \citep{nemhauser1978analysis}}\label{alg:greedy}
\begin{algorithmic}[1]
\renewcommand{\algorithmicrequire}{\textbf{Input:}}
\renewcommand{\algorithmicensure}{\textbf{Output:}}
\Require ground set $V=\{1,\ldots, n\}$, monotone submodular function $f:2^V\to\RR_+$, and $k\in\ZZ_+$.
\Ensure feasible set $S$ with $|S|\leq k$.
\State Initialize $S=0$.
\While {$|S| < k$}
\State $S \leftarrow S+ \argmax_{e\in V\backslash S}f_S(e)$
\EndWhile
\end{algorithmic}
\end{algorithm}

Recall that given parameters $c\geq 1$ and $\theta\in [0,1]$, a function is $(c,\theta)$-monotonic sharp if there exists an optimal set $S^*$ such that for any set $S$ with at most $k$ elements, then
\[\sum_{e\in S^*\backslash S} f_S(e)\geq \left(\frac{|S^*\backslash S|}{k\cdot c}\right)^{\frac{1}{\theta}}\cdot f(S^*)
\]
\begin{proof}[Proof of Theorem \ref{theorem:main_card}]
Let us denote by $S_{i}$ the set we obtain in the $i$-th iteration of Algorithm \ref{alg:greedy}. Note that $S^g:= S_k$. By using the properties of the function $f$, we can obtain the following sequence of inequalities
\begin{align}
f(S_i) - f(S_{i-1}) &= \frac{\sum_{e\in S^*\backslash S_{i-1}}f(S_i) - f(S_{i-1})}{|S^*\backslash S_{i-1}|} \label{eq:ineq_greedy}\\
&\geq \frac{\sum_{e\in S^*\backslash S_{i-1}}f_{S_{i-1}}(e)}{|S^*\backslash S_{i-1}|} \tag{choice of greedy}
\end{align}
Now, from the sharpness condition we know that
\[\frac{1}{|S^*\backslash S_{i-1}|}\geq \frac{f(S^*)^\theta}{kc}\cdot \left(\sum_{e\in S^*\backslash S_{i-1}}f_{S_{i-1}}(e)\right)^{-\theta}\]
so we obtain the following bound
\begin{align}
f(S_i) &- f(S_{i-1})  \geq \frac{f(S^*)^\theta}{kc}\cdot \left(\sum_{e\in S^*\backslash S_{i-1}}f_{S_{i-1}}(e)\right)^{1-\theta} \tag{sharpness} \\ \vspace{0.5em}
&\geq \frac{f(S^*)^\theta}{kc}\cdot [f(S_{i-1}\cup S^*)-f(S_{i-1})]^{1-\theta} \tag{submodularity}\\
&\geq \frac{f(S^*)^\theta}{kc}\cdot [f(S^*) - f(S_{i-1})]^{1-\theta}. \tag{monotonicity}
\end{align}
Therefore, we need to solve the following recurrence
\begin{equation}\label{eq:recurrence_aux} a_i \geq a_{i-1} + \frac{a^\theta}{kc}\cdot [a - a_{i-1}]^{1-\theta}\end{equation}
where $a_i =f(S_i)$, $a_0=0$ and $a=f(S^*)$.


Define \(h(x) = x+ \frac{a^\theta}{kc}\cdot [a-x]^{1-\theta},\) where $x\in[0,a]$. Observe that
\(h'(x)=1-\frac{a^\theta(1-\theta)}{kc}\cdot[a-x]^{-\theta}.\)
Therefore, $h$ is increasing in the interval \(I:= \left\{x: \ 0\leq x\leq a\cdot \left(1-\left(\frac{1-\theta}{kc}\right)^{1/\theta}\right) \right\}\). Let us define
\[b_i := a\cdot \left[1-\left(1-\frac{\theta }{c}\cdot\frac{i}{k}\right)^{\frac{1}{\theta}}\right] .\]
First, let us check that $b_i\in I$ for all $i\in\{0,\ldots, k\}$. Namely, for any $i$ we need to show that
\begin{equation*}
a\cdot \left[1-\left(1-\frac{\theta }{c}\cdot\frac{i}{k}\right)^{\frac{1}{\theta}}\right]  \leq   a\cdot \left(1-\left(\frac{1-\theta}{kc}\right)^{1/\theta}\right) \quad \Leftrightarrow \quad (kc-i\theta)\geq 1-\theta
\end{equation*}
The expression $kc -i \theta$ is decreasing on $i$. Hence, we just need the inequality for $i=k$, namely $k(c-\theta)\geq 1-\theta$, which is true since $c\geq1$ and $k\geq 1$.

Our goal is to prove that $a_i\geq b_i$, so by induction let us assume that $a_{i-1}\geq b_{i-1}$ is true (observe $a_0\geq b_0$). By using monotonicity of $h$ on the interval $I$, we get $h(a_{i-1})\geq h(b_{i-1})$. Also, observe that recurrence \eqref{eq:recurrence_aux} is equivalent to write $a_i \geq h(a_{i-1})$ which implies that $a_i \geq h(b_{i-1})$. To finish the proof we will show that $h(b_{i-1})\geq b_i$.

Assume for simplicity that $a=1$. For $x\in [0,k]$, define
 \[g(x) :=  \left(1-\frac{\theta}{kc}\cdot x\right)^{1/\theta}.\]
 Note that $g'(x) = - \frac{1}{kc}\cdot g(x)^{1-\theta}$ and $g''(x) = \frac{1-\theta}{(kc)^2}\cdot g(x)^{1-2\theta}$. Observe that $g$ is convex, so for any $x_1,x_2\in[0,k]$ we have \(g(x_2)\geq g(x_1) + g'(x_1)\cdot (x_2-x_1).\) By considering $x_2 = i$ and $x_1 = i-1$, we obtain
\begin{equation}\label{eq:convex_ineq1}
g(i) - g(i-1) - g'(i-1) \geq 0
\end{equation}
On the other hand,
\begin{align*}
h(b_{i-1})- b_i&=1-\left(1-\frac{\theta }{c}\cdot\frac{i-1}{k}\right)^{\frac{1}{\theta}} + \frac{1}{kc}\cdot \left(1-\frac{\theta }{c}\cdot\frac{i-1}{k}\right)^{\frac{1-\theta}{\theta}} - 1+\left(1-\frac{\theta }{c}\cdot\frac{i}{k}\right)^{\frac{1}{\theta}}\\
&=\left(1-\frac{\theta }{c}\cdot\frac{i}{k}\right)^{\frac{1}{\theta}} -\left(1-\frac{\theta }{c}\cdot\frac{i-1}{k}\right)^{\frac{1}{\theta}} +  \frac{1}{kc}\cdot \left(1-\frac{\theta }{c}\cdot\frac{i-1}{k}\right)^{\frac{1-\theta}{\theta}}
\end{align*}
which is exactly the left-hand side of \eqref{eq:convex_ineq1}, proving $h(b_{i-1})- b_i\geq 0$.

Finally, \[f(S^g) = a_k\geq b_k =  \left[1-\left(1-\frac{\theta }{c}\right)^{\frac{1}{\theta}}\right]\cdot f(S^*),\] proving the desired guarantee.
\end{proof}

We recover the classical $1-1/e$ approximation factor, originally proved in \citep{nemhauser1978analysis}.
\begin{corollary}\label{corollary:recovery}
The greedy algorithm achieves a $1-\frac1e$-approximation for any monotone submodular function.
\end{corollary}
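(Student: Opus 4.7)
The plan is to combine Theorem~\ref{theorem:main_card} with part 1 of Lemma~\ref{lemma:sharpness_facts} and pass to a limit. Specifically, Lemma~\ref{lemma:sharpness_facts}(1) tells us that for every monotone submodular $f$ and every $\theta > 0$, $c > 1$, the function $f$ is $(c, \theta)$-monotonic sharp in the degenerate sense that the right-hand side of \eqref{eq:sharp_inequality} tends to $0$ as $(c,\theta) \to (1, 0)$, while monotonicity makes the left-hand side nonnegative. So the hypothesis of Theorem~\ref{theorem:main_card} is satisfied for parameters arbitrarily close to $(1, 0)$.

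Next I would apply Theorem~\ref{theorem:main_card} for such parameters, obtaining
\[
f(S^g) \geq \left[1 - \left(1 - \tfrac{\theta}{c}\right)^{1/\theta}\right] f(S^*),
\]
and then take the limit as $c \to 1$ and $\theta \to 0$. The key computation is
\[
\lim_{(c,\theta) \to (1,0)} \left(1 - \tfrac{\theta}{c}\right)^{1/\theta} = \lim_{\theta \to 0} (1 - \theta)^{1/\theta} = e^{-1},
\]
which is the standard limit (easiest to verify by taking $\log$ and using $\log(1 - \theta) = -\theta + O(\theta^2)$). Since this limit holds uniformly along any sequence of valid parameter choices, the inequality survives the limit, giving $f(S^g) \geq (1 - 1/e)\, f(S^*)$.

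There is essentially no obstacle here: the corollary is a formal limiting consequence of Theorem~\ref{theorem:main_card} and Lemma~\ref{lemma:sharpness_facts}(1). The only conceptual point worth emphasizing in the write-up is that the bound in Theorem~\ref{theorem:main_card} is a valid lower bound for \emph{every} admissible pair $(c, \theta)$ (not just a single one), so we are entitled to take the supremum over such pairs on the right-hand side, and this supremum is $1 - 1/e$.
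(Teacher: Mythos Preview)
Your proposal is correct and follows essentially the same approach as the paper: invoke Lemma~\ref{lemma:sharpness_facts}(1) to get sharpness for parameters near $(1,0)$, apply Theorem~\ref{theorem:main_card}, and pass to the limit. The only cosmetic difference is that the paper first uses monotonicity of $1-(1-\theta/c)^{1/\theta}$ in $\theta$ to bound it below by $1-e^{-1/c}$ for every $\theta$, and then lets $c\to 1$; you instead take the joint limit directly, which is equally valid.
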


\begin{proof}
We proved in Lemma \ref{lemma:sharpness_facts} that any monotone submodular function is $(c,\theta)$-monotonic sharp when $\theta\to0$ and $c\to1$. On the other hand, we know that $1-\left(1-\frac{\theta }{c}\right)^{\frac{1}{\theta}}$ is increasing on $\theta$, so for all $c\geq 1$ and $\theta\in[0,1]$ we have
\(1-\left(1-\frac{\theta }{c}\right)^{\frac{1}{\theta}}\geq 1-e^{-1/c}\). By taking limits, we obtain $\lim_{c\to 1, \theta\to 0}1-\left(1-\frac{\theta }{c}\right)^{\frac{1}{\theta}}\geq 1-e^{-1}$.
\end{proof}


\paragraph{Contrasting Sharpness with Curvature.} A natural question is how our result compares to the curvature analysis proposed in \citep{conforti_etal84}. Specifically, is there any class of functions in which the monotonic sharpness criterion provides considerable better approximation guarantees than the curvature analysis? Recall the definition of curvature given in Section \ref{sec:related_work}. \citet{conforti_etal84} proved that if a monotone submodular function has curvature $\gamma\in[0,1]$, then the standard greedy algorithm guarantees a $(1-e^{-\gamma})/\gamma$ fraction of the optimal value. Consider an integer $k\geq 2$, a ground set $V=[n]$ with $n=2k$, a set function $f(S) = \min\{|S|,k+1\}$ and problem \eqref{eq:problem_def}. Observe that any set of size $k$ is an optimal set, so consider $S^*$ any set with $k$ elements. Note also that the curvature of the function is $\gamma = 1$, since $f(V) = k+1$ and $f(V- e) = k+1$ for any $e\in V$. Therefore, by using the curvature analysis, we can conclude an approximation factor of $1-1/e$. Let us analyze the monotonic sharpness of this function. Pick any subset $S\subseteq V$ such that $|S|\leq k$ and $S\neq S^*$, then we have $f(S) = |S|$ and $f_S(e) = 1$ for any $e\in S^*\backslash S$. Hence,
\[\frac{\sum_{e\in S^*\backslash S} f_S(e)}{f(S^*)} =  \frac{|S^*\backslash S|}{k},\]
which implies that parameters $\theta =1$ and $c=1$ are feasible in inequality \eqref{eq:sharp_inequality}. We can conclude that the sharpness analysis gives us an approximation factor of 1. From this simple example, we observe that curvature is a global parameter of the function that does not take into consideration the optimal set and can be easily perturbed, while the sharpness criterion focuses on the behavior of the function around the optimal solution. More precisely, take any function $f$ with curvature close to 0, which means an approximation guarantee close to 1. Then, take $\tilde{f}(S) = \min\{f(S),f(S^*)\}$. This function is still monotone and submodular, but its curvature now is 1, while its sharpness is the same as the original function $f$.

\subsection{Dynamic Monotonic Sharpness}\label{sec:dynamic_sharp}
In this section, we focus on proving the main results for dynamic sharpness, Theorem \ref{theorem:general_approx}. We emphasize that the greedy algorithm automatically adapts to the dynamic sharpness of the function without requiring parameters $(c_i,\theta_i)$ as part of the input.

\begin{proof}[Proof of Theorem \ref{theorem:general_approx}]
Observe that in the $i$-th iteration of the greedy algorithm $|S_i| = i$, so the change of the sharpness parameters will occur in every iteration $i$. The proof is similar to Theorem \ref{theorem:main_card}, but the recursion needs to be separated in each step $i$. Let us recall recursion \eqref{eq:recurrence_aux}: for any $i\in[k]$
\[a_i \geq a_{i-1} + \frac{a^\theta}{kc_{i-1}}\cdot [a - a_{i-1}]^{1-\theta_{i-1}},\]
where $a_i = f(S_i)$, $a_0 = 0$, and $a = f(S^*)$. For simplicity assume that $a=1$. 
We proceed the proof by induction. Note that for $i=1$ we need to prove that
\(a_1\geq \frac{1}{kc_0}.\)
For $c_0$ and $\theta_0$, the sharpness inequality \eqref{eq:sharp_inequality} needs to be checked only for $S=\emptyset$, which is trivially satisfied with $c_0=\theta_0= 1$. From the proof of Theorem \ref{theorem:main_card} we can conclude the following for $i=1$.
\[a_{1}\geq  \left[1-\left(1-\frac{\theta_0 }{kc_0}\right)^{\frac{1}{\theta_0}}\right], \]
and given that $c_0 = \theta_0=1$ is valid pair of parameters, then this inequality is simply \(a_1\geq \frac{1}{kc_0}\), proving the desired base case. Let us denote
\begin{equation*}
b_j := \Bigg[1 - \Bigg(\bigg(\Big(1-\frac{\theta_0}{c_0k}\Big)^{\frac{\theta_1}{\theta_0}} \cdots -\frac{\theta_{j-2}}{c_{j-2}k}\bigg)^{\theta_{j-1}/\theta_{j-2}}- \frac{\theta_{j-1}}{c_{j-1}k}\Bigg)^{\frac{1}{\theta_{j-1}}}\Bigg]
\end{equation*}
for $1\leq j\leq k$. We assume that $a_{i}\geq b_{i}$ is true, and will prove that $a_{i+1}\geq b_{i+1}$.
 Similarly to the proof of Theorem \ref{theorem:main_card}, we define \(h(x) := x + \frac{1}{kc_{i}}[1-x]^{1-\theta_{i}}\) for $x\in [0,1]$, which is increasing in the interval \(I:= \left\{x: \ 0\leq x\leq  1-\left(\frac{1-\theta_{i}}{kc_{i}}\right)^{1/\theta_{i}} \right\}.\) 
Let us prove that $b_{i}\in I$. First,  observe that $b_{i}\geq 0$. For the other inequality in $I$ we have
\begin{equation*}
b_{i} \leq 1-\left(\frac{1-\theta_{i}}{kc_{i}}\right)^{1/\theta_{i}} \quad \Leftrightarrow \quad \left(\left(1-\frac{\theta_0}{c_0k}\right)^{\theta_1/\theta_0} \ldots - \frac{\theta_{i-1}}{c_{i-1}k}\right)^{\theta_{i}/\theta_{i-1}}\geq \frac{1- \theta_{i} }{kc_{i}},
\end{equation*}
which is satisfied for sufficiently large $k$.


Similarly than the proof of Theorem \ref{theorem:main_card}, for $x\in [i,k]$ define
\begin{equation*}
g(x): =  \Bigg(\bigg(\Big(1-\frac{\theta_0}{c_0k}\Big)^{\frac{\theta_1}{\theta_0}} \cdots -\frac{\theta_{i-1}}{c_{i-1}k}\bigg)^{\frac{\theta_{i}}{\theta_{i-1}}} - \frac{\theta_{i}}{c_{i}k}\cdot (x-i)\Bigg)^{\frac{1}{\theta_{i}}}
\end{equation*}
 Note that $g'(x) = - \frac{1}{kc_{i}}\cdot g(x)^{1-\theta_{i}}$ and $g''(x) = \frac{1-\theta_{i}}{(kc_{i})^2}\cdot g(x)^{1-2\theta_{i}}$. Observe that $g$ is convex, so for any $x_1,x_2\in[i,k]$ we have \(g(x_2)\geq g(x_1) + g'(x_1)\cdot (x_2-x_1).\) By considering $x_2 = i+1$ and $x_1 = i$, we obtain
\begin{equation}\label{eq:convex_ineq}
g(i+1) - g(i) - g'(i) \geq 0
\end{equation}
Inequality \eqref{eq:convex_ineq} is exactly $h(b_{i})- b_{i+1}\geq 0$, since $g(i+1) = 1 - b_{i+1}$ and $g(i) = 1-b_i$.
Finally, since we assumed $a_{i}\geq b_{i}$, then \(a_{i+1}\geq h(a_{i}) \geq h(b_{i})\geq b_{i+1},\) where the first inequality is the definition of the recursion, the second inequality is due to the monotonicity of $h$ in the interval $I$, and finally, the last inequality was proven in \eqref{eq:convex_ineq}. Therefore, $a_k\geq b_k$ which proves the desired guarantee since $f(S^g) = a_k$.

\end{proof}

\begin{observation}
Note that we recover Theorem \ref{theorem:main_card} when $(c_i,\theta_i)=(c,\theta)$ for all $i\in\{0,\ldots, k-1\}$.
\end{observation}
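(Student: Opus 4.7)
The plan is to verify this observation by direct algebraic simplification of the nested expression that appears inside the bracket in Theorem~\ref{theorem:general_approx}. Specializing $(c_i,\theta_i)=(c,\theta)$ for every $i\in\{0,\ldots,k-1\}$, every exponent of the form $\theta_j/\theta_{j-1}$ collapses to $1$, so each nested layer acts on the preceding quantity simply by subtracting the constant $\theta/(ck)$. The nested expression therefore telescopes into a plain arithmetic progression of $k$ identical subtractions, and the outer exponent $1/\theta_{k-1}$ becomes $1/\theta$.

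More concretely, I would define the inner sequence $u_0 := 1 - \tfrac{\theta_0}{c_0 k}$ and, for $1\leq j\leq k-1$,
\[
u_j := u_{j-1}^{\theta_j/\theta_{j-1}} - \frac{\theta_j}{c_j k},
\]
so that the Theorem~\ref{theorem:general_approx} guarantee reads $1 - u_{k-1}^{1/\theta_{k-1}}$. Under the identification $(c_i,\theta_i)\equiv(c,\theta)$, the exponent $\theta_j/\theta_{j-1}$ equals $1$ for all $j\geq 1$, so the recursion reduces to the linear recurrence $u_j = u_{j-1} - \tfrac{\theta}{ck}$. A short induction then gives $u_j = 1 - (j+1)\tfrac{\theta}{ck}$ for $0\leq j\leq k-1$; in particular $u_{k-1} = 1 - \tfrac{\theta}{c}$.

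Substituting back into the statement of Theorem~\ref{theorem:general_approx} yields
\[
f(S^g) \;\geq\; \left[1 - u_{k-1}^{1/\theta_{k-1}}\right] f(S^*) \;=\; \left[1-\left(1-\frac{\theta}{c}\right)^{1/\theta}\right] f(S^*),
\]
which is precisely the bound of Theorem~\ref{theorem:main_card}.

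There is no real obstacle here beyond bookkeeping: the only thing to be careful about is making sure the base case of the induction (the innermost layer $j=0$, whose exponent is the implicit $\theta_0/\theta_0 = 1$ before the first subtraction) is handled consistently, so that exactly $k$ copies of $\theta/(ck)$ are subtracted and the telescoping gives $1 - k\cdot\theta/(ck) = 1-\theta/c$ rather than something off by one.
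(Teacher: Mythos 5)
Your verification is correct: with all $(c_i,\theta_i)=(c,\theta)$ the intermediate exponents $\theta_j/\theta_{j-1}$ all equal $1$, the nested expression telescopes to $1-\bigl(1-k\cdot\tfrac{\theta}{ck}\bigr)^{1/\theta}=1-\bigl(1-\tfrac{\theta}{c}\bigr)^{1/\theta}$, and (as the paper notes after Definition~\ref{def:dynamic_sharp}) the dynamic sharpness hypothesis with constant parameters is exactly the static one, so Theorem~\ref{theorem:general_approx} specializes to Theorem~\ref{theorem:main_card}. The paper states this observation without proof, and your algebraic collapse, including the care about counting exactly $k$ subtractions of $\theta/(ck)$, is precisely the intended justification.
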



\subsection{Analysis for Specific Classes of Functions}\label{sec:sharp_analysis}
Let us denote by $\mathcal{S}(f)$ the monotonic sharpness feasible region for a set function $f$, i.e., $f$ is $(c,\theta)$-monotonic sharp if, and only, if $(c,\theta)\in \mathcal{S}(f)$. We focus now on obtaining the best approximation guarantee for a monotone submodular function within region $\mathcal{S}(f)$.
\begin{proposition}\label{prop:best_guarantee}
Given a non-negative monotone submodular function $f:2^V\to\RR_+$ with monotonic sharpness region $\mathcal{S}(f)$, then the highest approximation guarantee $1-\left(1-\frac{\theta}{c}\right)^{\frac{1}{\theta}}$ for Problem \eqref{eq:problem_def} is given by a pair of parameters that lies on the boundary of $\mathcal{S}(f)$.
\end{proposition}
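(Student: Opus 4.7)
The plan is to combine two elementary monotonicity facts: (i) the guarantee function $G(c,\theta) := 1-(1-\theta/c)^{1/\theta}$ is strictly decreasing in $c$ and strictly increasing in $\theta$ (the latter was already used in Corollary \ref{corollary:recovery}, and the former follows by a direct one-line computation, or just by inspection of the exponential limit), and (ii) the feasible sharpness region $\mathcal{S}(f)$ is ``downward-closed'' in the correct sense by Lemma \ref{lemma:sharpness_facts}(2): if $(c,\theta)\in\mathcal{S}(f)$ then $(c',\theta')\in\mathcal{S}(f)$ for every $c'\geq c$ and $\theta'\leq\theta$.

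Given these two facts, the argument is essentially by contradiction. First I would introduce the partial order $(c',\theta') \preceq (c,\theta)$ iff $c'\leq c$ and $\theta'\geq\theta$, under which $G$ is monotone nondecreasing and $\mathcal{S}(f)$ is an upper set (``up-closure''). Suppose for contradiction that the supremum of $G$ on $\mathcal{S}(f)$ were attained only at some interior point $(c_0,\theta_0)$. By definition of interior, there exists $\varepsilon>0$ such that the open box $(c_0-\varepsilon,c_0+\varepsilon)\times(\theta_0-\varepsilon,\theta_0+\varepsilon)$ lies inside $\mathcal{S}(f)$. Pick for instance $(c_0-\varepsilon/2,\theta_0+\varepsilon/2)$; this point is still in $\mathcal{S}(f)$ and has strictly larger guarantee $G$ by the strict monotonicities in (i), contradicting the optimality of $(c_0,\theta_0)$. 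Hence the maximizer lies on $\partial\mathcal{S}(f)$.

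To turn the contradiction argument into a full proof I would handle the existence of the maximizer explicitly: restrict attention to the compact box $[1,c_{\max}]\times[0,1]$ for some fixed $c_{\max}$ large enough that $\mathcal{S}(f)\cap([1,c_{\max}]\times[0,1])$ already contains a point with guarantee arbitrarily close to the supremum (this is possible because $G(c,\theta)$ tends to $1-1/c$ as $\theta\to0$, which is bounded away from the true supremum once $c$ is large). On this compact set the intersection with the closure $\overline{\mathcal{S}(f)}$ is compact, $G$ is continuous, and so a maximizer exists; the argument above then forces it to lie on the boundary.

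The only mildly subtle step is ensuring the supremum is genuinely attained and that ``boundary'' really refers to the Pareto-efficient frontier of $\mathcal{S}(f)$ with respect to $\preceq$ rather than some spurious boundary component (for instance the trivial edges $c=1$ or $\theta=0$, which correspond to the worst rather than best guarantees). This is handled by noting from Lemma \ref{lemma:sharpness_facts}(1) that $\mathcal{S}(f)$ always contains the ``degenerate'' corner $(1,0)$, so the efficient frontier is well-defined and non-empty; the monotonicity of $G$ then ensures the maximum on this frontier coincides with the maximum on all of $\mathcal{S}(f)$.
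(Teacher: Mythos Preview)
Your proposal is correct, and it takes a genuinely different route from the paper. The paper's own argument is a classical convex-analysis one: it verifies that the objective $1-(1-\theta/c)^{1/\theta}$ is convex on $[1,\infty)\times(0,1]$, and that each sharpness constraint can be rewritten as a convex inequality in $(c,\theta)$, so that $\mathcal{S}(f,S^*)$ is a convex region; maximizing a convex function over a convex set then forces the optimum to the boundary.

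Your approach sidesteps both convexity checks entirely. You only use (i) the monotonicity of the guarantee in each coordinate and (ii) the order-closure property of $\mathcal{S}(f)$ from Lemma~\ref{lemma:sharpness_facts}(2), and then argue directly that any interior point admits a strictly better neighbor. This is more elementary and more robust: it would continue to work even if the objective or the region failed to be convex, as long as the coordinatewise monotonicities hold. The paper's approach, by contrast, is shorter once the convexity facts are granted, and yields the slightly stronger conclusion that the optimum lies at an extreme point of the (convex) feasible region, not merely on its boundary. Your closing discussion of the Pareto frontier versus ``spurious'' boundary components is thoughtful but goes beyond what the proposition actually asserts; for the stated claim it suffices that no interior point can be optimal.
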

\begin{proof}
Fix an optimal solution $S^*$ for Problem \eqref{eq:problem_def}. Note that we can compute the best pair $(c,\theta)$ for that $S^*$ by solving the following optimization problem
\begin{align}\max & \quad 1-\left(1-\frac{\theta}{c}\right)^{\frac{1}{\theta}} \label{problem:max_guarantee}\\ \quad s.t. \ & \quad (c,\theta)\in \mathcal{S}(f,S^*),\notag\end{align}
where $\mathcal{S}(f,S^*)$ corresponds to the sharpness region with respect to $S^*$. Observe that function $1-\left(1-\frac{\theta}{c}\right)^{\frac{1}{\theta}}$ is continuous and convex in $[1,\infty)\times(0,1]$. Note that for any $c\geq 1$, if $\theta\to 0$, then  $\left(1-\frac{\theta}{c}\right)^{\frac{1}{\theta}}\to e^{-1/c}$. Also, for any subset $S$, Inequality \eqref{eq:sharp_inequality} is equivalent to \[\frac{|S^*\backslash S|}{k}\cdot\left(\frac{\sum_{e\in S^*\backslash S}f_S(e)}{\OPT}\right)^{-\theta}-c\leq 0 \] where the left-hand side is convex as a function of $c$ and $\theta$, hence $\mathcal{S}(f,S^*)$ is a convex region. Therefore, the optimal pair $(c^*,\theta^*)$ of Problem \eqref{problem:max_guarantee} lies on the boundary of $\mathcal{S}(f,S^*)$. Since we considered an arbitrary optimal set, then the result easily follows.
\end{proof}

Let us study $\mathcal{S}(f)$ for general monotone submodular functions. If we fix $|S^*\backslash S|$, the right-hand side of \eqref{eq:sharp_inequality} does not depend explicitly on $S$. On the other hand, for a fixed size $|S^*\backslash S|$, there is a subset $S^\ell$ that minimizes the left-hand side of \eqref{eq:sharp_inequality}, namely
\[\sum_{e\in S^*\backslash S}f_S(e)\geq\sum_{e\in S^*\backslash S^\ell}f_{S^\ell}(e),\]
for all feasible subset $S$ such that $|S^*\backslash S| = \ell$. For each $\ell\in[k]$, let us denote
\[W(\ell) := \sum_{e\in S^*\backslash S^\ell}f_{S^\ell}(e).\]
Therefore, instead of checking Inequality \eqref{eq:sharp_inequality} for all feasible subsets, we only need to check $k$ inequalities defined by $W(1),\ldots, W(k)$. In general, computing $W(\ell)$ is difficult since we require access to $S^*$. However, for very small instances or specific classes of functions, this computation can be done efficiently. In the following, we provide a detailed analysis of the sharpness feasible region for specific classes of functions.

%

\paragraph{Linear functions.}
Consider weights $w_e>0$ for each element $e\in V$ and function $f(S) = \sum_{e\in S}w_e$. Let us order elements by weight as follows $w_1\geq w_2\geq\ldots\geq w_n$, where element $e_i$ has weight $w_i$. We observe that an optimal set $S^*$ for Problem \eqref{eq:problem_def} is formed by the top-$k$ weighted elements and $\OPT = \sum_{i\in[k]}w_i$.
\begin{proposition}[Linear functions]\label{lemma:linear}
Consider weights $w_1\geq w_2\geq\ldots\geq w_n>0$, where element $e_i\in V$ has weight $w_i$, and denote $W(\ell) := \sum_{i=k-\ell+1}^kw_i$ for each $\ell \in \{1,\ldots,k\}$.  Then, the linear function $f(S)=\sum_{i:e_i\in S}w_i$ is $(c,\theta)$-monotonic sharp in
\begin{equation*}
\Bigg\{(c,\theta)\in[1,\infty)\times[0,1]:   \ c\geq \frac{\ell}{k}\cdot\left(\frac{W(\ell)}{W(k)}\right)^{-\theta}, \quad \forall \ \ell \in [k-1]\Bigg\}.
\end{equation*}
Moreover, this region has only $k-1$ constraints. 
\end{proposition}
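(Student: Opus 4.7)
The key simplification for linear $f(S)=\sum_{i:e_i\in S}w_i$ is that marginals are state-independent: $f_S(e)=w_e$ for every $S$ and every $e\notin S$. Combined with Lemma \ref{lemma:sharpness_facts} (part 3), which lets us restrict attention to feasible $S$ with $|S|=k$, the sharpness inequality \eqref{eq:sharp_inequality} reduces to a condition that depends only on the set $S^*\setminus S$ of ``missing'' optimal elements. My plan is to (i) identify, for each fixed $\ell\in[k]$, the subset $S^\ell$ that minimizes the left-hand side of \eqref{eq:sharp_inequality} among all $S$ with $|S|=k$ and $|S^*\setminus S|=\ell$; (ii) rewrite the resulting $k$ minimal inequalities as explicit linear lower bounds on $c$; and (iii) check that the constraint at $\ell=k$ is trivial.

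For step (i), since $f_S(e)=w_e$, we have $\sum_{e\in S^*\setminus S}f_S(e)=\sum_{e\in S^*\setminus S}w_e$, which is independent of $S\setminus S^*$. To minimize this over choices of $S^*\setminus S$ of cardinality $\ell$, we should remove from $S^*$ its $\ell$ lightest elements, namely $\{e_{k-\ell+1},\dots,e_k\}$, so the minimum value equals $W(\ell)=\sum_{i=k-\ell+1}^k w_i$. Thus $f$ is $(c,\theta)$-monotonic sharp if and only if, for every $\ell\in[k]$,
\[
W(\ell)\;\geq\;\left(\frac{\ell}{kc}\right)^{1/\theta}\cdot W(k).
\]

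For step (ii), raising both sides to the power $\theta$ and rearranging yields the equivalent form
\[
c\;\geq\;\frac{\ell}{k}\cdot\left(\frac{W(\ell)}{W(k)}\right)^{-\theta},\qquad \ell\in[k],
\]
which is precisely the inequality appearing in the proposition. For step (iii), the $\ell=k$ instance simplifies to $c\geq 1\cdot 1^{-\theta}=1$, already imposed by $(c,\theta)\in[1,\infty)\times[0,1]$; hence only the $k-1$ inequalities indexed by $\ell\in[k-1]$ are binding, giving the claimed count. I don't anticipate a serious obstacle: the only subtle point is making sure that minimizing over $|S|=k$ with $|S^*\setminus S|=\ell$ is genuinely achieved by the described subset, which follows from the fact that the weights outside $S^*$ play no role in the relevant sum and that the ordering $w_1\geq\dots\geq w_n$ forces the minimizing $\ell$-element subset of $S^*$ to be its tail.
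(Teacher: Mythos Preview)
Your proposal is correct and follows essentially the same route as the paper: both reduce the sharpness inequality for linear $f$ to $\sum_{e\in S^*\setminus S}w_e\ge (\ell/(kc))^{1/\theta}W(k)$, observe that the minimizing $\ell$-subset of $S^*$ is its tail $\{e_{k-\ell+1},\dots,e_k\}$ so the left side becomes $W(\ell)$, rearrange to $c\ge(\ell/k)(W(\ell)/W(k))^{-\theta}$, and drop $\ell=k$ as redundant with $c\ge1$. The only cosmetic difference is that you invoke Lemma~\ref{lemma:sharpness_facts}(3) to restrict to $|S|=k$, whereas for linear $f$ this is unnecessary since $f_S(e)=w_e$ makes the left-hand side depend only on $S^*\setminus S$; either way the conclusion is identical.
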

\begin{proof}[Proof of Proposition \ref{lemma:linear}]
First, observe that $W(k) = \OPT$. Note that for any subset we have $|S^*\backslash S|\in\{1,\ldots,k\}$ (for $|S^*\backslash S|=0$ the sharpness inequality is trivially satisfied). Given $\ell \in [k]$, pick any feasible set $S$ such that $|S^*\backslash S| = \ell$, then the sharpness inequality corresponds to
\begin{equation}\label{eq:linear}\sum_{e\in S^*\backslash S} w_e\geq \left(\frac{\ell}{k\cdot c}\right)^{\frac{1}{\theta}}\cdot W(k),\end{equation}
where the left-hand side is due to linearity. Fix $\ell \in [k]$, we observe that the lowest possible value for the left-hand side in \eqref{eq:linear} is when $S^*\backslash S = \{e_{k-\ell+1},\ldots, e_k\}$, i.e., the $\ell$ elements in $S^*$ with the lowest weights, proving the desired result. Note that Definition \ref{def:d_sharp} is equivalent to
\begin{equation*}
\frac{W(\ell)}{W(k)}\geq \left(\frac{\ell}{k\cdot c}\right)^{\frac{1}{\theta}}, \quad\ell \in [k] \quad \Leftrightarrow \quad c\geq \frac{\ell}{k}\cdot\left(\frac{W(\ell)}{W(k)}\right)^{-\theta}, \quad\ell \in [k]
\end{equation*}
Finally, observe that $\ell = k$ is redundant with $c\geq 1$. Given this, we have $k-1$ curves that define a feasible region in which the linear function is $(c,\theta)$-monotonic sharp. In particular, if we consider $c=1$, then we can pick $\theta = \min_{\ell\in[k-1]}\left\{\frac{\log(k/\ell)}{\log(W(k)/W(\ell))}\right\}$.
\end{proof}

\noindent From Proposition \ref{lemma:linear}, we observe that the sharpness of the function depends exclusively on $w_1,\ldots,w_k$. Moreover, the weights' magnitude directly affects the sharpness parameters. Let us analyze this: assume without loss of generality that $\frac{w_k}{W(k)} \leq \frac{1}{k}$, and more generally, $\frac{W(\ell)}{W(k)} \leq \frac{\ell}{k}$ for all $\ell \in [k-1]$, so we have
\[\left(\frac{\ell}{k\cdot c}\right)^{\frac{1}{\theta}}\leq \frac{W(\ell)}{W(k)} \leq \frac{\ell}{k}, \quad\ell \in \{1,\ldots,k\}\]
This shows that a sharper linear function has more similar weights in its optimal solution, i.e., when the weights in the optimal solution are \emph{balanced}. We have the following facts for $\epsilon\in(0,1)$:
\begin{itemize}
\item If $\frac{w_k}{W(k)} = (1-\epsilon)\cdot\frac{1}{k}$, then $\frac{W(\ell)}{W(k)} \geq  (1-\epsilon)\cdot\frac{\ell}{k}$ for every $\ell\in[k-1]$. Observe that $c = \frac{1}{1-\epsilon}$ and $\theta = 1$ satisfy $(1-\epsilon)\cdot\frac{\ell}{k}\geq \left(\frac{\ell}{k\cdot c}\right)^{\frac{1}{\theta}}$ for any $\ell\in[k-1]$, showing that $f$ is $\left(\frac{1}{1-\epsilon},1\right)$-monotonic sharp. More importantly, when $\epsilon$ is small the function becomes sharper. Also, if we set $c=1$, then from the analysis of Proposition \ref{lemma:linear} we could pick
\begin{equation*}
\theta = \min_{\ell\in[k-1]}\left\{\frac{\log(k/\ell)}{\log(W(k)/W(\ell))}\right\} \geq \min_{\ell\in[k-1]}\left\{\frac{\log \frac{k}{\ell}}{\log \frac{k}{\ell (1-\epsilon)}}\right\} = \frac{\log k}{\log \frac{k}{(1-\epsilon)}},
\end{equation*}
showing that $f$ is $(1,\Omega(\frac{\log k}{\log(k/(1-\epsilon))}))$-monotonic sharp. Again, when $\epsilon\to 0$ the function becomes sharper.
\item On the other hand, suppose that $\frac{w_2}{W(k)} = \frac{\epsilon}{k}$, then $\frac{W(\ell)}{W(k)} \leq  \epsilon\cdot\frac{\ell}{k}$ for every $\ell\in[k-1]$. Similarly to the previous bullet, by setting $c=1$ we can choose
\begin{equation*}
\theta = \min_{\ell\in[k-1]}\left\{\frac{\log(k/\ell)}{\log(W(k)/W(\ell))}\right\} \leq \min_{\ell\in[k-1]}\left\{\frac{\log \frac{k}{\ell}}{\log \frac{k}{\ell\epsilon}}\right\} = \frac{\log k}{\log \frac{k}{\epsilon}},
\end{equation*}
showing that $f$ is $(1,O(\frac{\log k}{\log(k/\epsilon)}))$-sharp. Observe that when $\epsilon\to 0$ the function becomes less sharp.
\end{itemize}

\begin{observation}\label{obs:construct_linear}
Given parameters $c\geq 1$ and $\theta\in[0,1]$, it is easy to construct a linear function that is $(c,\theta)$-monotonic sharp by using Proposition \ref{lemma:linear}. Without loss of generality assume $W(k) = 1$. From constraint $\ell = 1$ choose
\(w_k = \left(\frac{1}{kc}\right)^{\frac{1}{\theta}},\)
and more generally, from constraint $\ell\in[k-1]$ choose
\(w_{k-\ell+1}= \left(\frac{\ell}{kc}\right)^{\frac{1}{\theta}} - \sum_{i=k-\ell+2}^kw_i.\) Finally, set $w_1 = 1-\sum_{i=2}^kw_i$.
\end{observation}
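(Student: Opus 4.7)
My plan is to verify that the weights $w_1\ge w_2\ge\cdots\ge w_k>0$ produced by the recipe are a legitimate weight vector and that the feasibility inequalities of Proposition~\ref{lemma:linear} then hold automatically. The key object is the auxiliary function $g(x):=(x/(kc))^{1/\theta}$ on $[0,k]$, which is strictly increasing and, because $1/\theta\ge 1$, convex. Unrolling the recursion shows that the construction is equivalent to setting $W(\ell)=g(\ell)$ for every $\ell\in[k-1]$ together with the normalization $W(k)=1$. Plugging $W(\ell)/W(k)=(\ell/(kc))^{1/\theta}$ into the constraint $c\ge(\ell/k)(W(\ell)/W(k))^{-\theta}$ gives exactly $c$ on the right-hand side for every $\ell\in[k-1]$, so all constraints are tight and the resulting linear function is $(c,\theta)$-monotonic sharp as soon as the $w_i$'s form a legal (positive, nonincreasing) sequence.

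It then remains to check these two properties. Positivity for $i\in\{2,\dots,k\}$ is immediate, since the recursion collapses to $w_{k-\ell+1}=g(\ell)-g(\ell-1)>0$ by monotonicity of $g$; for the top weight, $w_1=1-g(k-1)>0$ since $g(k-1)<g(k)=(1/c)^{1/\theta}\le 1$. For the ordering, the comparisons $w_i\ge w_{i+1}$ with $i\ge 2$ reindex to second differences $g(\ell)-2g(\ell-1)+g(\ell-2)\ge 0$, which follow at once from the convexity of $g$.

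The step I expect to require real care, and the main obstacle, is the boundary comparison $w_1\ge w_2$. Here $w_1$ is defined by the normalization rather than by the $g$-recursion, so it is not a pure second difference of $g$: one computes $w_1-w_2=1-2g(k-1)+g(k-2)$. Applying convexity as a single Jensen inequality gives $2g(k-1)\le g(k-2)+g(k)$, which reduces the task to showing $g(k)\le 1$; this is precisely where the hypothesis $c\ge 1$ (together with $\theta\in(0,1]$) enters, since $g(k)=(1/c)^{1/\theta}$. This is the only point in the ordering argument where $c\ge 1$ is actually invoked, which explains why the normalization $W(k)=1$ can be safely bolted onto a construction that otherwise relies purely on convexity of $g$.
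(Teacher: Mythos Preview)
Your verification is correct and complete. The paper offers no proof beyond the bare recipe in the Observation itself, so you have supplied precisely the details the paper leaves implicit: unrolling the recursion to $W(\ell)=g(\ell)$, checking tightness of the Proposition~\ref{lemma:linear} constraints, and verifying positivity and monotonicity of the weights via convexity of $g(x)=(x/(kc))^{1/\theta}$, with the boundary case $w_1\ge w_2$ being the one place where $c\ge 1$ is genuinely used.
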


\begin{observation}\label{obs:construct_linear2}
Given $\beta\in[0,1]$, there exists a linear function $f$ and parameters $(c,\theta)\in[1,\infty)\times[0,1]$ such that $f$ is $(c,\theta)$-monotonic sharp and $1-\left(1-\frac{\theta}{c}\right)^{\frac{1}{\theta}}\geq 1-\beta$. To obtain this, we use Observation \ref{obs:construct_linear} with $c=1$ and any $\theta\in[0,1]$ such that $\beta\geq (1-\theta)^{1/\theta}$.
\end{observation}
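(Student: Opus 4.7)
The plan is to follow the two-step recipe indicated after the statement: first pick a valid exponent $\theta$, then invoke Observation \ref{obs:construct_linear} to realize a linear function witnessing $(1,\theta)$-monotonic sharpness. Fix $c=1$ from the outset, so the target inequality $1-(1-\theta/c)^{1/\theta}\geq 1-\beta$ reduces to finding $\theta\in[0,1]$ with $(1-\theta)^{1/\theta}\leq \beta$, with the convention that the value at $\theta=0$ is $\lim_{\theta\to 0^+}(1-\theta)^{1/\theta}=e^{-1}$.

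The first step is to show such a $\theta$ exists for every $\beta\in[0,1]$. I would analyze $\varphi(\theta):=(1-\theta)^{1/\theta}$ and argue it is continuous and strictly decreasing on $[0,1]$ with $\varphi(0)=e^{-1}$ and $\varphi(1)=0$ (for instance, by showing $\log\varphi(\theta)=\tfrac{1}{\theta}\log(1-\theta)$ is decreasing via direct differentiation). Consequently: if $\beta\geq e^{-1}$, take $\theta=0$; if $\beta<e^{-1}$, use the intermediate value theorem to pick $\theta\in(0,1]$ with $\varphi(\theta)=\beta$. In either case $\varphi(\theta)\leq \beta$ as required.

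The second step is to produce the linear function. Feed the pair $(c,\theta)=(1,\theta)$ into the construction of Observation \ref{obs:construct_linear}: set $w_k=(1/k)^{1/\theta}$, then recursively $w_{k-\ell+1}=(\ell/k)^{1/\theta}-\sum_{i=k-\ell+2}^{k}w_i$ for $\ell=2,\dots,k-1$, and finally $w_1=1-\sum_{i=2}^{k}w_i$. I should check that these $w_i$ are nonnegative and nonincreasing in $i$, which follows because $\ell\mapsto(\ell/k)^{1/\theta}$ is increasing and concave-like for $\theta\in(0,1]$, so consecutive differences are positive and nonincreasing; extend by any $w_{k+1}\leq w_k$ on the remaining ground-set elements. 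By Observation \ref{obs:construct_linear}, the resulting linear $f$ is $(1,\theta)$-monotonic sharp.

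Combining the two steps, $(c,\theta)=(1,\theta)$ satisfies both requirements: $f$ is $(c,\theta)$-monotonic sharp, and
\[
1-\Bigl(1-\tfrac{\theta}{c}\Bigr)^{\!1/\theta}=1-\varphi(\theta)\geq 1-\beta.
\]
The only nontrivial piece is the monotonicity of $\varphi$ together with the verification that the recursively defined weights in Observation \ref{obs:construct_linear} are nonnegative; both reduce to one-variable calculus, so no substantive obstacle is expected. The case $\theta=0$ is handled by the limit convention, which is consistent with Corollary \ref{corollary:recovery}.
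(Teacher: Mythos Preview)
Your plan matches the paper's own treatment, which is nothing more than the sentence already embedded in the observation: set $c=1$, pick $\theta$ with $(1-\theta)^{1/\theta}\leq\beta$, and invoke Observation~\ref{obs:construct_linear}. One correction to your verification of the weights: for $\theta\in(0,1]$ the exponent $1/\theta\geq 1$, so $\ell\mapsto(\ell/k)^{1/\theta}$ is \emph{convex}, not ``concave-like'', and its consecutive differences are \emph{nondecreasing} in $\ell$; since $w_{k-\ell+1}$ equals the $\ell$-th such difference, this still gives $w_1\geq w_2\geq\cdots\geq w_k>0$ as required (the inequality $w_1\geq w_2$ also follows from convexity via the midpoint inequality at $\ell=k-1,k$). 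As a minor simplification, when $\beta\geq e^{-1}$ you can take any $\theta\in(0,1]$ rather than $\theta=0$, which sidesteps the limit convention altogether.
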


\paragraph{Concave over modular.}
In this section, we will study a generalization of linear functions. Consider weights $w_e>0$ for each element $e\in V$, a parameter $\alpha\in[0,1]$ and function $f(S) = \left(\sum_{e\in S}w_e\right)^\alpha$. Observe that the linear case corresponds to $\alpha = 1$. Let us order elements by weight as follows $w_1\geq w_2\geq\ldots\geq w_n$, where element $e_i$ has weight $w_i$. Similarly than the linear case, we note that an optimal set $S^*$ for Problem \eqref{eq:problem_def} is formed by the top-$k$ weighted elements and $\OPT = \left(\sum_{i\in[k]}w_i\right)^\alpha$.
\begin{proposition}[Concave over modular functions]\label{lemma:concave}
Consider weights $w_1\geq w_2\geq\ldots\geq w_n>0$, where element $e_i\in V$ has weight $w_i$ and parameter $\alpha\in[0,1]$. Denote
\begin{equation*}
W(\ell) := \sum_{i=k-\ell+1}^k\left[\left(\sum_{j = k+1}^{k+\ell}w_{j} + \sum_{j=1}^{k-\ell}w_{j}+w_i\right)^\alpha - \left(\sum_{j = k+1}^{k+\ell}w_{j} + \sum_{j=1}^{k-\ell}w_{j}\right)^\alpha\right]
\end{equation*}
for each $\ell \in \{1,\ldots,k\}$.  Then, the function $f(S)=\left(\sum_{i:e_i\in S}w_i\right)^\alpha$ is $(c,\theta)$-monotonic sharp in
\begin{equation*}
\Bigg\{(c,\theta)\in[1,\infty)\times[0,1]:  c\geq \frac{\ell}{k}\cdot\left(\frac{W(\ell)}{\OPT}\right)^{-\theta}, \quad \forall \ \ell \in [k]\Bigg\}.
\end{equation*}
\end{proposition}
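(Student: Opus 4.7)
My plan mirrors the structure of Proposition~\ref{lemma:linear} (linear functions), with the extra work coming from the nonlinearity of the outer $\alpha$-th power. Since $x \mapsto x^\alpha$ is strictly increasing on $\RR_+$, an optimal set for Problem \eqref{eq:problem_def} is $S^* = \{e_1,\ldots,e_k\}$, the $k$ elements of largest weight, and $\OPT = \bigl(\sum_{i=1}^k w_i\bigr)^\alpha$. Fix $\ell \in [k]$ and any feasible $S$ with $|S^* \setminus S| = \ell$; writing $W_S := \sum_{e \in S} w_e$, the left-hand side of the sharpness inequality is
\[
\sum_{e \in S^*\setminus S} f_S(e) \;=\; \sum_{e \in S^*\setminus S}\bigl[(W_S + w_e)^\alpha - W_S^\alpha\bigr].
\]
As in the linear case, the sharpness inequality at $\ell$ is tight at the $S$ minimizing this quantity, so the proposition reduces to showing that this minimum equals $W(\ell)$; the claimed region then follows by rearranging $W(\ell)/\OPT \geq (\ell/(kc))^{1/\theta}$ into $c \geq (\ell/k)(W(\ell)/\OPT)^{-\theta}$ for each $\ell \in [k]$.

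The key analytic ingredients are two monotonicity properties of $\phi(t,w) := (t+w)^\alpha - t^\alpha$ on $\RR_+ \times \RR_+$. Since $\alpha \in [0,1]$, the concavity of $x \mapsto x^\alpha$ makes $\phi$ nonincreasing in $t$ (diminishing returns), and $\phi$ is obviously nondecreasing in $w$. Therefore, to minimize the sum above I want both (a)~$W_S$ as large as possible, and (b)~the weights $\{w_e : e \in S^*\setminus S\}$ as small as possible, subject to $|S \cap S^*| = k - \ell$ and $|S \setminus S^*| \leq \ell$.

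These two wishes are reconciled by a pair of exchange arguments. Suppose $e_j \in S^*\setminus S$ and $e_i \in S \cap S^*$ with $j < i$ (so $w_j \geq w_i$), and swap them; the new weight sum $W_{\mathrm{new}} = W_{\mathrm{old}} - w_i + w_j \geq W_{\mathrm{old}}$. Every summand indexed by $e \notin \{e_i,e_j\}$ weakly decreases by monotonicity of $\phi$ in $t$, and the swap pair contributes $(W_{\mathrm{old}} + w_j)^\alpha - W_{\mathrm{old}}^\alpha$ before versus $(W_{\mathrm{new}} + w_i)^\alpha - W_{\mathrm{new}}^\alpha = (W_{\mathrm{old}} + w_j)^\alpha - W_{\mathrm{new}}^\alpha$ after, which is also weakly smaller since $W_{\mathrm{new}}^\alpha \geq W_{\mathrm{old}}^\alpha$. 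Hence the swap weakly decreases the total, forcing $S \cap S^* = \{e_1,\ldots,e_{k-\ell}\}$ at the minimum. A parallel swap between $e_m \in S \setminus S^*$ and $e_p \in V \setminus (S \cup S^*)$ with $p < m$, together with the observation that adjoining any element of $V \setminus (S \cup S^*)$ to $S$ weakly decreases every summand by the same $t$-monotonicity, forces $|S \setminus S^*| = \ell$ and $S \setminus S^* = \{e_{k+1},\ldots,e_{k+\ell}\}$ at the minimum. Substituting $S \cap S^* = \{e_1,\ldots,e_{k-\ell}\}$, $S \setminus S^* = \{e_{k+1},\ldots,e_{k+\ell}\}$, and $S^* \setminus S = \{e_{k-\ell+1},\ldots,e_k\}$ into the sum recovers precisely the expression $W(\ell)$ stated in the proposition. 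The main obstacle I anticipate is keeping the two exchange arguments compatible—the first varies $S \cap S^*$ while the second varies $S \setminus S^*$—but since both monotone reductions rest on the single fact that $\phi$ is nonincreasing in its first argument, iterating them in either order converges to the same minimizer without interference.
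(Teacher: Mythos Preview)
Your proposal is correct and follows essentially the same approach as the paper's proof: both identify the minimizing $S$ by exploiting that $\phi(t,w)=(t+w)^\alpha-t^\alpha$ is nonincreasing in $t$ and nondecreasing in $w$, concluding that the worst case has $S=\{e_1,\ldots,e_{k-\ell},e_{k+1},\ldots,e_{k+\ell}\}$. The paper simply asserts that the minimum occurs when $\sum_{e'\in S}w_{e'}$ is maximized and each $w_e$ in the sum is minimized, whereas you supply the explicit exchange arguments verifying that these two goals are compatible and pin down the minimizer; this extra care is welcome but does not constitute a different route.
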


\begin{proof}
First, observe that unlike the linear case, $W(k) \neq \OPT$. Given $\ell \in [k]$, pick any feasible set such that $|S^*\backslash S| = \ell$, then the sharpness inequality corresponds to
\begin{equation}
\sum_{e\in S^*\backslash S} \left(\sum_{e'\in S}w_{e'} + w_e\right)^\alpha-\left(\sum_{e'\in S}w_{e'} \right)^\alpha \geq \left(\frac{\ell}{k\cdot c}\right)^{\frac{1}{\theta}}\cdot \OPT.\label{eq:concave}
\end{equation}
Observe that function $(x+y)^\alpha -x^\alpha$ is increasing in $y$ and decreasing in $x$. Therefore, the lowest possible value for the left-hand side in \eqref{eq:concave} is when $\sum_{e'\in S}w_{e'}$ is maximized and $w_e$ is minimized. Given this, for each $\ell \in [k]$ we choose $S=\{e_1,\ldots,e_{k-\ell},e_{k+1},\ldots,e_{k+\ell}\}$. In this way, we get $S^*\backslash S = \{e_{k-\ell+1},\ldots, e_k\}$, whose elements have the lowest weights, and $S$ has the highest weight possible in $V\backslash\{e_{k-\ell+1},\ldots, e_k\}$. Hence, Definition \ref{def:d_sharp} is equivalent to
\begin{equation*}
\frac{W(\ell)}{\OPT}\geq \left(\frac{\ell}{k\cdot c}\right)^{\frac{1}{\theta}}, \quad\ell \in [k] \quad \Leftrightarrow \quad c\geq \frac{\ell}{k}\cdot\left(\frac{W(\ell)}{\OPT}\right)^{-\theta}, \quad\ell \in [k].
\end{equation*}
We have $k$ curves that define a feasible region in which the function is $(c,\theta)$-monotonic sharp with respect to $S^*$. In particular, if we consider $c=1$, then we can pick $\theta = \min_{\ell\in[k]}\left\{\frac{\log(k/\ell)}{\log(\OPT/W(\ell))}\right\}$.
\end{proof}


\paragraph{Coverage function, \citep{nemhauser1978hardness}.}

Consider a space of points $\mathcal{X}=\{1,\ldots,k\}^k$, sets $A_i = \{x\in\mathcal{X}: \ x_i = 1\}$ for $i\in[k-1]$ and $B_i=\{x\in\mathcal{X}: \ x_k = i\}$ for $i \in [k]$, ground set $V = \{A_1,\ldots, A_{k-1}, B_1,\ldots, B_k\}$, and function $f(S) = \left|\bigcup_{U\in S}U\right|$ for $S\subseteq V$. In this case, Problem \eqref{eq:problem_def} corresponds to finding a family of $k$ elements in $V$ that maximizes the coverage of $\mathcal{X}$. By simply counting, we can see that the optimal solution for Problem \eqref{eq:problem_def} is $S^* = \{B_1,\ldots, B_k\}$ and $\OPT = k^k$. As shown in \citep{nemhauser1978hardness}, the greedy algorithm achieves the best possible $1-1/e$ guarantee for this problem.
\begin{proposition}\label{prop:worst_case}
Consider ground set $V = \{A_1,\ldots, A_{k-1}, B_1,\ldots, B_k\}$. Then, the function $f(S) = \left|\bigcup_{U\in S}U\right|$ is $(c,\theta)$-monotonic sharp in
\begin{equation*}
\Bigg\{(c,\theta)\in[1,\infty)\times[0,1]: \ c\geq \frac{\ell}{k}\cdot\left(\frac{\ell}{k}\cdot\left(\frac{k-1}{k}\right)^\ell\right)^{-\theta}, \quad \forall \ \ell \in [k-1]\Bigg\}.
\end{equation*}
\end{proposition}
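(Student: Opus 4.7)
The plan is to follow the template of Propositions \ref{lemma:linear} and \ref{lemma:concave}: for each $\ell \in [k-1]$, I would identify a worst-case feasible set $S$ with $|S^* \setminus S| = \ell$ that minimizes the left-hand side of Inequality \eqref{eq:sharp_inequality}, compute this minimum as $W(\ell)$, and then rearrange the sharpness inequality $W(\ell)/\OPT \geq (\ell/(kc))^{1/\theta}$ into the stated constraint on $(c,\theta)$.

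By Lemma \ref{lemma:sharpness_facts}(3), I can restrict attention to sets with $|S| = k$, so $S$ must consist of $k-\ell$ elements of $S^* = \{B_1,\ldots,B_k\}$ together with $\ell$ elements of $V \setminus S^* = \{A_1,\ldots,A_{k-1}\}$; in particular, this forces $\ell \leq k-1$, explaining why the proposition ranges only over $\ell \in [k-1]$. By the symmetry among the $B_j$'s (permuting the values of the coordinate $x_k$) and among the $A_i$'s (permuting the first $k-1$ coordinates of $\mathcal{X}$), every such $S$ yields the same value of $\sum_{e\in S^*\setminus S} f_S(e)$, so I fix the canonical representative $S = \{A_1,\ldots,A_\ell\} \cup \{B_1,\ldots,B_{k-\ell}\}$, giving $S^*\setminus S = \{B_{k-\ell+1},\ldots,B_k\}$.

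The crux is the marginal count. For each $j \in [\ell]$, the set $B_{k-\ell+j}$ consists of the $k^{k-1}$ points with $x_k = k-\ell+j$. A point in $B_{k-\ell+j}$ is already covered by $S$ iff $x_i = 1$ for some $i \in [\ell]$, since the $B_i$'s in $S$ contribute nothing here ($x_k = k-\ell+j \notin [k-\ell]$). Hence the newly covered points have $x_k$ fixed, $x_i \in \{2,\ldots,k\}$ for all $i \in [\ell]$, and $x_i$ free for $i \in \{\ell+1,\ldots,k-1\}$, yielding $f_S(B_{k-\ell+j}) = (k-1)^\ell k^{k-\ell-1}$. Summing over the $\ell$ elements of $S^*\setminus S$ gives $W(\ell) = \ell(k-1)^\ell k^{k-\ell-1}$, and since $\OPT = k^k$ we obtain $W(\ell)/\OPT = (\ell/k)((k-1)/k)^\ell$; plugging this into the sharpness inequality and isolating $c$ produces precisely $c \geq (\ell/k)\bigl((\ell/k)((k-1)/k)^\ell\bigr)^{-\theta}$.

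The main obstacle I foresee is the careful bookkeeping of the marginal count, since I need to track simultaneously which points of $B_{k-\ell+j}$ are covered by the $A_i$'s in $S$ and which are trivially missed by the other $B_i$'s in $S$. Once the count $(k-1)^\ell k^{k-\ell-1}$ is in hand, the rest is algebraic rearrangement, and the symmetry argument removes any need to consider non-canonical choices of $S$.
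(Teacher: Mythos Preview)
Your proposal is correct and follows essentially the same approach as the paper: both restrict to $|S|=k$ via Lemma~\ref{lemma:sharpness_facts}(3), invoke the symmetry among the $A_i$'s and among the $B_j$'s to reduce to a canonical choice of $S$, and then count to obtain $W(\ell)=\ell(k-1)^{\ell}k^{k-\ell-1}$ and hence $W(\ell)/\OPT=(\ell/k)\bigl((k-1)/k\bigr)^{\ell}$. The only cosmetic difference is that the paper computes $f(S)=k^k-\ell k^{k-\ell-1}(k-1)^{\ell}$ and $f(S+e)=k^k-(\ell-1)k^{k-\ell-1}(k-1)^{\ell}$ and subtracts, whereas you count the newly covered points of $B_{k-\ell+j}$ directly.
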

\begin{proof}
First, note that any family of the form $\{A_{i_1},\ldots, A_{i_\ell}, B_{j_1},\ldots, B_{j_{k-\ell}}\}$ covers the same number of points for $\ell\in[k-1]$. Second, since there are only $k-1$ sets $A_i$, then any subset $S\subseteq V$ of size $k$ satisfies $|S^*\backslash S|\leq k-1$. By simply counting, for $\ell \in [k-1]$ and set $S$ such that $|S^*\backslash S| = \ell$, we have
\begin{align*}
f(S) &= k^k - \ell k^{k-\ell - 1} (k-1)^\ell,\\
f(S+e) & =   k^k - (\ell -1)k^{k-\ell - 1} (k-1)^\ell.
\end{align*}
Then, \eqref{eq:sharp_inequality} becomes
\[\frac{\ell}{k}\cdot\left(\frac{k-1}{k}\right)^\ell\geq \left(\frac{\ell}{kc}\right)^{\frac{1}{\theta}}\]
Observe that $f$ is $\left(1,\frac{1}{k}\right)$-sharp since $\ell \leq k-1$ and $\left(\frac{k-1}{k}\right)^\ell\geq \left(\frac{\ell}{k}\right)^{k-1}$.
\end{proof}

\begin{observation}[Coverage function, \citep{nemhauser1978hardness}.]\label{obs:worst_case}
Note that in order to achieve $1-\left(1-\frac{\theta}{c}\right)^{\frac{1}{\theta}}\geq 1-e^{-1}$, we need $\theta\in[0,1]$ and $1\leq  c\leq \frac{\theta}{1-e^{-\theta}}$. On the other hand, by taking $\ell = k-1$ in Proposition \ref{prop:worst_case} we have $c\geq \left(\frac{k-1}{k}\right)^{-k\theta +1}$, where the right-hand side tends to $e^\theta$ when $k$ is sufficiently large. Therefore, for $k$ sufficiently large we have $e^\theta\leq c \leq \frac{\theta}{1-e^{-\theta}}$, whose only feasible point is $c= 1$ and $\theta\to 0$. 
\end{observation}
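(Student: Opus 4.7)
The plan is to sandwich the sharpness parameter $c$ between two opposing constraints — an upper bound forced by demanding the sharpness-based guarantee to reach $1-1/e$, and a lower bound extracted from Proposition \ref{prop:worst_case} — and to show that the sandwich closes to the single point $(c,\theta)=(1,0)$ as $k\to\infty$.

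First I would translate the approximation requirement into a bound on $c$. Rewriting $1-(1-\theta/c)^{1/\theta}\ge 1-e^{-1}$ as $(1-\theta/c)^{1/\theta}\le e^{-1}$, taking logarithms, and solving for $c$ gives the clean upper bound $c\le \theta/(1-e^{-\theta})$ for $\theta\in(0,1]$. Next I would specialize the sharpness feasibility condition from Proposition \ref{prop:worst_case} to the tightest index $\ell=k-1$, obtaining
\[
c \;\ge\; \frac{k-1}{k}\left(\frac{k-1}{k}\cdot\Bigl(\frac{k-1}{k}\Bigr)^{k-1}\right)^{-\theta} \;=\; \left(1-\tfrac{1}{k}\right)^{1-k\theta},
\]
and observe that as $k\to\infty$ this right-hand side converges to $e^\theta$. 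Consequently, in the regime of large $k$ any feasible sharpness pair must satisfy $e^\theta \le c \le \theta/(1-e^{-\theta})$.

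To close the sandwich, I would multiply through by $1-e^{-\theta}>0$ to convert the nested inequality into $e^\theta - 1 \le \theta$; combined with the textbook inequality $e^\theta \ge 1+\theta$ (strict for $\theta>0$), this forces $\theta=0$, at which point both the upper and lower bounds pinch $c$ to $1$. The main obstacle is really just bookkeeping: one has to verify carefully that the limiting passage $k\to\infty$ is legitimate, i.e., that no interior feasible pair $(c,\theta)$ can persist for arbitrarily large $k$. This is immediate from the pointwise convergence $(1-1/k)^{1-k\theta}\to e^\theta$ together with the strictness of $e^\theta>1+\theta$ for $\theta>0$, which leaves no room for $c$ to sit strictly between $e^\theta$ and $\theta/(1-e^{-\theta})$ in the limit.
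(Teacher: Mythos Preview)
Your proposal is correct and follows exactly the reasoning sketched in the Observation itself: deriving the upper bound $c\le \theta/(1-e^{-\theta})$ from the approximation requirement, the lower bound $c\ge (1-1/k)^{1-k\theta}\to e^\theta$ from the $\ell=k-1$ constraint in Proposition~\ref{prop:worst_case}, and then squeezing. The one genuine addition you make over the paper is the final step: the paper simply asserts that $e^\theta\le c\le \theta/(1-e^{-\theta})$ has only the degenerate feasible point, whereas you supply the short argument (multiply by $1-e^{-\theta}$ to get $e^\theta-1\le\theta$, contradict $e^\theta>1+\theta$ for $\theta>0$) that actually closes the sandwich. This is a clean and correct filling-in of what the paper leaves implicit.
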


\section{Analysis of Submodular Sharpness}\label{sec:s_sharpness}

In this section, we focus on the analysis of the standard greedy algorithm for Problem \eqref{eq:problem_def} when the objective function is $(c,\theta)$-submodular sharp. First, let us prove the following basic facts:
\begin{lemma}\label{lemma:subm_sharpness_facts}
Consider any monotone submodular set function $f:2^V\to\RR_+$. Then,
\begin{enumerate}
\item There is always a set of parameters $c$ and $\theta$ such that $f$ is $(c,\theta)$-submodular sharp. In particular, $f$ is always $(c,\theta)$-submodular sharp when both $c\to1$ and $\theta\to0$.
\item If $f$ is $(c,\theta)$-submodular sharp, then for any $c'\geq c$ and $\theta'\leq \theta$, $f$ is $(c',\theta')$-submodular sharp. Therefore, in order to maximize the guarantee of Theorem \ref{theorem:total_sharp} we look for the smallest feasible $c$ and the largest feasible $\theta$.
\item Definition \ref{def:d_sharp} is stronger than Definition \ref{def:total_d_sharp}.
\end{enumerate}
\end{lemma}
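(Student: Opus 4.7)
}

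The plan is to handle the three parts in order, since Part 3 subsumes Part 1 once the monotonic case is available, but the three claims are conceptually distinct. Throughout I will denote $\ell := |S^*\setminus S|$ and $m := \max_{e\in S^*\setminus S} f_S(e)$, and will use repeatedly the chain of inequalities
\[
\ell\cdot m \;\geq\; \sum_{e\in S^*\setminus S}f_S(e) \;\geq\; f(S\cup S^*)-f(S) \;\geq\; f(S^*)-f(S),
\]
which follow from the definition of the max, submodularity (telescoping marginal gains), and monotonicity. I will also use that $f(S)\leq f(S^*)=\OPT$ since $|S|\leq k$, hence $\OPT\geq f(S^*)-f(S)\geq 0$.

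For Part 1, taking the limit $\theta\to 0$ and $c\to 1$ the exponent $1-\theta$ tends to $1$, so the right-hand side of \eqref{eq:total_sharp_inequality} tends to $\frac{1}{k}\bigl[f(S^*)-f(S)\bigr]$. The display above then yields $m\geq \frac{1}{k}(f(S^*)-f(S))$, which is exactly the limiting inequality. This mirrors the analogous argument in Lemma \ref{lemma:sharpness_facts}(1).

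For Part 2, the main idea is to rewrite the right-hand side of \eqref{eq:total_sharp_inequality} as
\[
\frac{1}{kc}\bigl[f(S^*)-f(S)\bigr]\cdot\left(\frac{\OPT}{f(S^*)-f(S)}\right)^{\theta}.
\]
Since $\OPT\geq f(S^*)-f(S)\geq 0$, the base of the exponent is at least $1$, so the expression is monotonically non-decreasing in $\theta$ and non-increasing in $c$. Consequently, replacing $(c,\theta)$ by $(c',\theta')$ with $c'\geq c$ and $\theta'\leq\theta$ only weakens the required lower bound on $m$, so the sharpness inequality continues to hold. The degenerate case $f(S)=f(S^*)$ is trivial since then both sides are $0$ (for $\theta<1$).

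For Part 3, which is the most substantive part, I would assume $f$ is $(c,\theta)$-monotonic sharp and show the submodular sharpness inequality with the same parameters. The trick is to split $m = m^{1-\theta}\cdot m^{\theta}$ and bound each factor. From $\ell m\geq f(S^*)-f(S)$ I would get $m^{1-\theta}\geq [f(S^*)-f(S)]^{1-\theta}/\ell^{1-\theta}$. From monotonic sharpness combined with $\ell m\geq \sum_{e\in S^*\setminus S} f_S(e)$ I would get $\ell m\geq (\ell/(kc))^{1/\theta}\OPT$, hence $m^{\theta}\geq \ell^{1-\theta}\OPT^{\theta}/(kc)$. Multiplying the two bounds, the factors of $\ell^{1-\theta}$ cancel exactly, leaving
\[
m \;\geq\; \frac{1}{kc}\bigl[f(S^*)-f(S)\bigr]^{1-\theta}\OPT^{\theta},
\]
which is precisely \eqref{eq:total_sharp_inequality}. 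The only subtlety is the boundary case $\ell=0$, where $S^*\subseteq S$ and $|S|\leq k$ forces $f(S)=f(S^*)$ so both sides of \eqref{eq:total_sharp_inequality} vanish. The main obstacle is recognizing the $m = m^{1-\theta}\cdot m^{\theta}$ split; once that is in hand, the algebra collapses cleanly.
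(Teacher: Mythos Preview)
Your proposal is correct. Parts~1 and~2 match the paper's proof essentially verbatim: the paper uses the same chain $m \geq \Sigma/\ell \geq (f(S\cup S^*)-f(S))/k \geq (f(S^*)-f(S))/k$ for Part~1, and the same monotonicity-in-$\theta$, antitonicity-in-$c$ observation for Part~2.

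For Part~3 your $m = m^{1-\theta}\cdot m^{\theta}$ split is a minor algebraic variant of the paper's route. Writing $\Sigma := \sum_{e\in S^*\setminus S} f_S(e)$, the paper instead raises the monotonic-sharpness inequality to the power $\theta$ to obtain $\Sigma/\ell \geq \Sigma^{1-\theta}\OPT^{\theta}/(kc)$, then bounds the right-hand side using $\Sigma \geq f(S^*)-f(S)$ and the left-hand side using $m\geq \Sigma/\ell$. Both arguments use exactly the same two ingredients (monotonic sharpness and the submodularity/monotonicity bound $\Sigma\geq f(S^*)-f(S)$); you combine them through $m$ directly, the paper through the average $\Sigma/\ell$. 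Neither route is more general or shorter than the other.
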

\begin{proof}
\begin{enumerate}
\item Note that $f$ satisfies the following sequence of inequalities for any set $S$:
\begin{equation}\label{eq:lemma_subm_sharp1}
\max_{e\in S^*\backslash S}f_S(e) \geq \frac{\sum_{e\in S^*\backslash S}f_S(e)}{|S^*\backslash S|} \geq  \frac{f(S\cup S^*) - f(S)}{k} \geq  \frac{f(S^*) - f(S)}{k}
\end{equation}
where the second inequality is because of submodularity and in the last inequality we applied monotonicity. Observe that \eqref{eq:lemma_subm_sharp1} is exactly \eqref{eq:total_sharp_inequality} for $c=1$ and $\theta \to 0$.

\item Observe that $\frac{[f(S^*) -f(S)]^{1-\theta} f(S^*)^\theta}{k\cdot c}$ as a function of $c$ and $\theta$ is increasing in $\theta$ and decreasing in $c$. Therefore, $\frac{[f(S^*) -f(S)]^{1-\theta} f(S^*)^\theta}{k\cdot c}\geq\frac{[f(S^*) -f(S)]^{1-\theta'} f(S^*)^{\theta'}}{k\cdot c'}$ for  $c'\geq c$ and $\theta'\leq \theta$.
\item Definition \ref{def:d_sharp} implies that
\begin{equation}\label{eq:lemma_subm_sharp2}
\frac{\sum_{e\in S^*\backslash S}f_S(e)}{|S^*\backslash S|} \geq \frac{\left[\sum_{e\in S^*\backslash S}f_S(e)\right]^{1-\theta}f(S^*)^\theta}{kc}.
\end{equation}
On the other hand, by using submodularity and monotonicity we get
\[ \sum_{e\in S^*\backslash S}f_S(e)\geq f(S^*\cup S)-f(S)\geq f(S^*) - f(S).\]
Therefore, by using \eqref{eq:lemma_subm_sharp2} we obtain
%
\[\max_{e\in S^*\backslash S}f_S(e)\geq  \frac{\sum_{e\in S^*\backslash S}f_S(e)}{|S^*\backslash S|} \geq \frac{\left[f(S^*)-f(S)\right]^{1-\theta}f(S^*)^\theta}{kc},\]
which proves the desired result.
\end{enumerate}
\end{proof}

\begin{proof}[Proof of Theorem \ref{theorem:total_sharp}]
Let us denote by $S_{i}:=\{e_1,\ldots,e_i\}$ the set we obtain in the $i$-th iteration of Algorithm \ref{alg:greedy} and $S_0 = \emptyset$. Note that $S^g:= S_k$. Since the greedy algorithm chooses the element with the largest marginal in each iteration, then for all $i\in[k]$ we have
\[f(S_i) - f(S_{i-1})\geq \max_{e\in S^*\backslash S_{i-1}}f_{S_{i-1}(e)}\]
Now, from the submodular sharpness condition we conclude that
\begin{equation} \label{eq:theorem_subm_sharp1} f(S_i) - f(S_{i-1})\geq  \frac{\left[f(S^*)-f(S_{i-1})\right]^{1-\theta}f(S^*)^\theta}{kc}.\end{equation}
The rest of the proof is the same as the proof of Theorem \ref{theorem:main_card}, which gives us the desired result.
\end{proof}

Finally, we need to prove the main result for the concept of dynamic submodular sharpness. This proof is similar to the proof of Theorem \ref{theorem:general_approx}.

\begin{proof}[Proof of Theorem \ref{theorem:general_approx2}]
For each iteration $i\in[k]$ in the greedy algorithm we have
\[f(S_i) - f(S_{i-1})\geq \max_{e\in S^*\backslash S_{i-1}}f_{S_{i-1}(e)}\]
Now, from the dynamic submodular sharpness condition we conclude that
\begin{equation} \label{eq:theorem_subm_sharp1} f(S_i) - f(S_{i-1})\geq  \frac{\left[f(S^*)-f(S_{i-1})\right]^{1-\theta_{i-1}}f(S^*)^{\theta_{i-1}}}{kc_{i-1}},\end{equation}
which gives the same recurrence than Theorem \ref{theorem:general_approx}. The rest of the proof is the same as the proof of Theorem \ref{theorem:general_approx}.

\end{proof}

\section{Experiments}\label{sec:experiments}

In this section, we provide a computational study of the sharpness criteria in two real-world applications: movie recommendation and non-parametric learning. In these experiments, we aim to explicitly obtain the sharpness parameters of the objective function for different small ground sets. With these results, we will empirically show how the approximation factors vary with respect to different instances defined by the cardinality budget $k$. We will observe that the curvature analysis \citep{conforti_etal84}, submodular stability \citep{chatzia_etal17} and monotonic sharpness are not enough, but more refined concepts as dynamic monotonic sharpness and submodular sharpness in its two versions provide strictly better results.

\paragraph{Search for monotonic sharpness.} Fix an optimal solution $S^*$. For each $\ell \in [k]$, compute 
\[W(\ell) :=\min_S\left\{ \sum_{e\in S^*\backslash S}f_{S}(e): |S|\leq k, \ |S^*\backslash S| = \ell\right\}.\]
To find parameters $(c,\theta)$ we follow a simple search: we sequentially iterate over possible values of $c$ in a fixed range $[1,c_{max}]$ (we consider granularity $0.01$ and $c_{max} = 3$). Given $c$, we compute \(\theta = \min_{\ell\in[k]}\left\{\frac{\log(kc/\ell)}{\log(\OPT/W(\ell))}\right\}.\) Once we have $c$ and $\theta$, we compute the corresponding approximation factor. If we improve we continue and update $c$; otherwise, we stop. A similar procedure is done for the case of dynamic sharpness.

\paragraph{Search for submodular sharpness.} Fix an optimal solution $S^*$. To find parameters $(c,\theta)$ we follow a simple search: we sequentially iterate over possible values of $c$ in a fixed range $[1,c_{max}]$ (we consider granularity $0.01$ and $c_{max} = 3$). Given $c$, we compute \(\theta = \min_{|S|\leq k}\left\{\frac{\log(kcW_2(S)/W(S))}{\log(\OPT/W(S))}\right\},\) where
\[W(S) := \OPT - f(S) \quad \text{and} \quad W_2(S) := \max_{e\in S^*\backslash S}f_S(e). \]
 Once we have $c$ and $\theta$, we compute the corresponding approximation factor. If we improve we continue and update $c$; otherwise, we stop. A similar procedure is done for the case of dynamic submodular sharpness.
 
 \paragraph{Experiments setup.} For each application, we will run experiments on small ground sets. For each budget size $k$, we sample $n=2k$ elements from the data sets which will be considered as the ground set. In each graph, we will plot the approximation factor (y-axis) obtained by the corresponding method in each instance (x-axis). The analysis we will study are: curvature (defined in Section \ref{sec:related_work}), monotonic and submodular sharpness (computed as described above), and finally, the greedy ratio (worst possible value output by the greedy algorithm in the corresponding instance).

\subsection{Non-parametric Learning}
For this application we follow the setup in \citep{mirzasoleiman_etal15}. Let $X_V$ be a set of random variables corresponding to bio-medical measurements, indexed by a ground set of patients $V$. We assume $X_V$ to be a Gaussian Process (GP), i.e., for every subset $S\subseteq V$, $X_S$ is distributed according to a multivariate normal distribution $\mathcal{N}(\muB_S,\SigmaB_{S,S})$, where $\muB_S = (\mu_{e})_{e\in S}$ and $\SigmaB_{S,S} = [\K_{e,e'}]_{e,e'\in S}$ are the prior mean vector and prior covariance matrix, respectively. The covariance matrix is given in terms of a positive definite kernel $\K$, e.g., a common choice in practice is the squared exponential kernel $\K_{e,e'} =\exp(-\|x_e-x_{e'}\|^2_2/h)$. Most efficient approaches for making predictions in GPs rely on choosing a small subset of data points. For instance, in the Informative Vector Machine (IVM) the goal is to obtain a subset $A$ such that maximizes the information gain, \(f(A) = \frac{1}{2}\log\text{det}(\ident + \sigma^{-2}\SigmaB_{A,A})\), which was shown to be monotone and submodular in \citep{krause_guestrin05}.

In our experiment, we use the Parkinson Telemonitoring dataset \citep{tsanas_etal10} consisting of a total of $5,875$ patients with early-stage Parkinson's disease and the corresponding bio-medical voice measurements with 22 attributes (dimension of the observations). We normalize the vectors to zero mean and unit norm. With these measurements we computed the covariance matrix $\Sigma$ considering the squared exponential kernel with parameter $h=0.75$. For the objective function we consider $\sigma = 1$. As we mentioned before, the objective in this application is to select the $k$ most informative patients. 

The objective function in this experiment is highly non-linear which makes difficult to obtain the sharpness parameters. Therefore, for this experiment we consider different small random instances with $n=2k$ where $k=\{5,\ldots,10\}$.  In Figure \ref{fig:approx_factors} (a) we plot the variation of the approximation factors with respect to different instances of size $n=2k$. Observe that the greedy algorithm finds a nearly optimal solution in each instance. The best approximation factor is obtained when using the concept of dynamic submodular sharpness, which is considerably close to the greedy results. These results significantly improve the ones obtained by the curvature analysis and monotonic sharpness, providing evidence that more refined notions of sharpness can capture the behavior of the greedy algorithm. 

\subsection{Movie Recommendation}
For this application we consider the MovieLens data-set \citep{movielens} which consists of 7,000 users and 13,977 movies. Each user had to rank at least one movie with an integer value in $\{0,\ldots, 5\}$ where 0 denotes that the movies was not ranked by that user. Therefore, we have a matrix $[r_{ij}]$ of rankings for each user $i$ and each movie $j$. The objective in this application is to select the $k$ highest ranked movies among the users. To make the computations less costly in terms of time, we use only $m = 1000$ users. In the same spirit, we will choose a small number $n$ from the 13,977 movies.

In our first experiment, we consider the following function \(f(S) = \left(\frac{1}{m}\sum_{i\in[m]}\sum_{j\in S} r_{ij}\right)^\alpha\) where $\alpha\in(0,1]$. We consider $\alpha = 0.8$ and different small random instances with $n=2k$ where $k=\{5,\ldots,10\}$. First, we noticed in our experiment that the instance is not submodular stable \citep{chatzia_etal17} since it had multiple optimal solutions. In Figure \ref{fig:approx_factors} (b) we plot the variation of the approximation factors with respect to different $k$'s. We observe that monotonic sharpness already gives us improved guarantees with respect to the worst-case $1-1/e$, although worse results than the curvature analysis. More refined definitions as the submodular sharpness slightly improve the results for any instance.


In the next experiment, we consider the facility-location function $f(S) = \frac{1}{m}\sum_{i\in[m]}\max_{j\in S} r_{ij}$. This function is known to be non-negative, monotone, and submodular. Most of the time this function is not 2-stable \citep{chatzia_etal17} since it has multiple optimal solutions. For this function, we consider different small random instances with $n=2k$ elements in the ground set where $k=\{5,\ldots,10\}$. In Figure \ref{fig:approx_factors} (c) we plot the variation of the approximation factors with respect to different values of $k$. We observed that the greedy algorithm (orange) always finds an optimal solution. We note that the curvature analysis and monotonic sharpness barely improve the worst-case ratio $1-1/e$. We obtain a significant improvement if we use the submodular sharpness approach. However, the gap between the greedy results and the dynamic submodular sharpness is still substantial, which may be due to the shape of this objective function: facility-location functions are known to be \emph{flat} and have multiple optimal solutions.

\subsection{Exemplar-based Clustering}\label{sec:image_clustering}

We follow the setup in \citep{mirzasoleiman_etal15}. Solving the \(k\)-medoid problem is a common way to select a subset of exemplars that represent a large dataset $V$ \citep{kaufman_09}. This is done by minimizing the sum of pairwise dissimilarities between elements in $A\subseteq V$ and $V$. Formally, define \(L(A) =\frac{1}{V}\sum_{e\in V} \min_{v\in A}d(e,v)\), where \(d: V\times V\to\RR_+\) is a distance function that represents the dissimilarity between a pair of elements. By introducing an appropriate auxiliary element $e_0$, it is possible to define a new objective \(f(A) := L(\{e_0\}) - L(A+e_0)\) that is monotone and submodular \citep{gomes_krause10}, thus maximizing \(f\) is equivalent to minimizing $L$.
In our experiment, we use the VOC2012 dataset \citep{voc_2012} which contains around 10,000 images. The ground set $V$ corresponds to images, and we want to select a subset of the images that best represents the dataset. Each image has several (possible repeated) associated categories such as person, plane, etc. There are around 20 categories in total. Therefore, images are represented by feature vectors obtained by counting the number of elements that belong to each category, for example, if an image has 2 people and one plane, then its feature vector is $(2,1,0,\ldots,0)$ (where zeros correspond to other elements). We choose the Euclidean distance $d(e,e') = \|x_e-x_{e'}\|$ where $x_e,x_{e'}$ are the feature vectors for images $e,e'$. We normalize the feature vectors to mean zero and unit norm, and we choose $e_0$ as the origin.

For this experiment, we consider different random small instances with $n=2k$ where $k=\{5,\ldots,10\}$. The objective function in this experiment is non-linear which makes difficult to obtain the sharpness parameters. In Figure \ref{fig:approx_factors} (d)  we plot the variation of the approximation factors with respect to different values of $k$.  The  dynamic submodular sharpness approach outperforms the rest of the procedures, although the greedy algorithm always finds an optimal solution.

\begin{figure*}[htb]
\centering
\begin{tabular}{cc}
\includegraphics[width=0.48\textwidth]{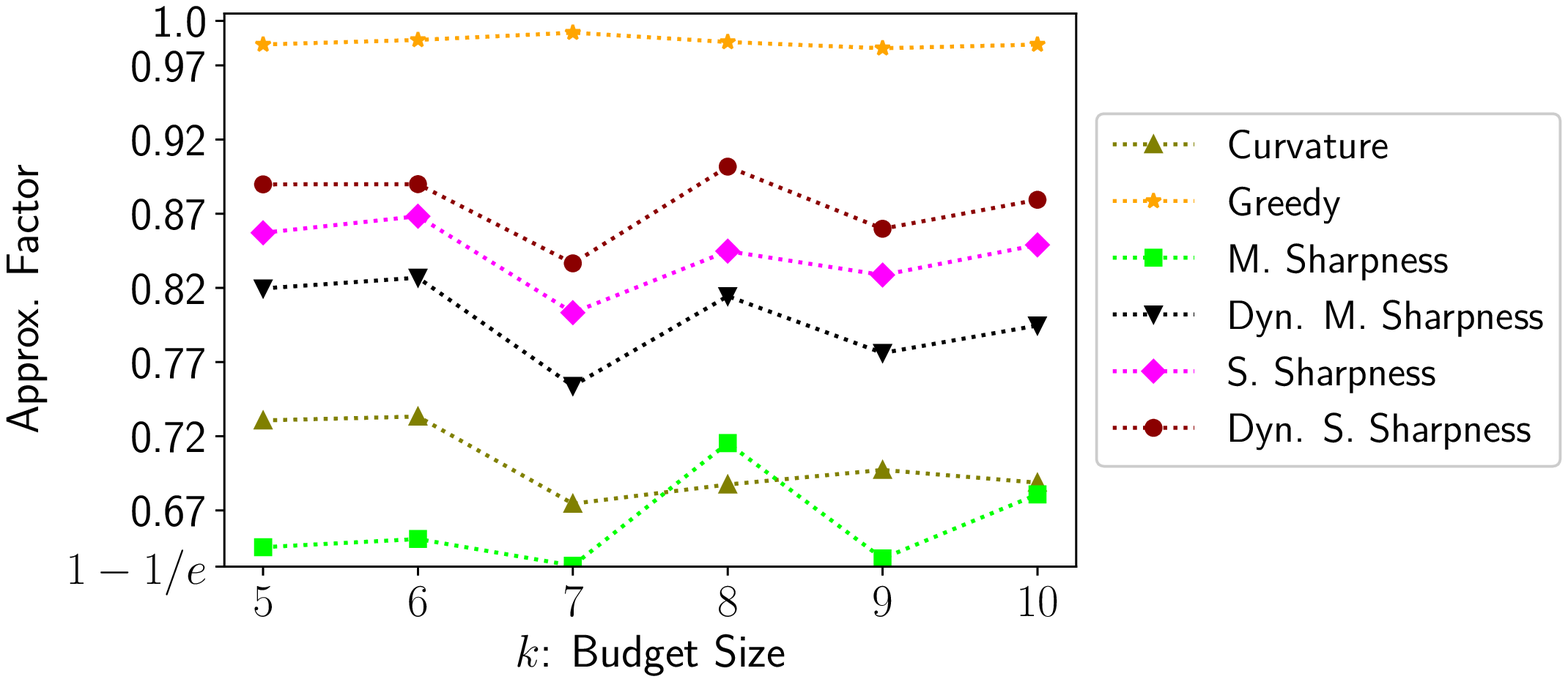}&
\includegraphics[width=0.48\textwidth]{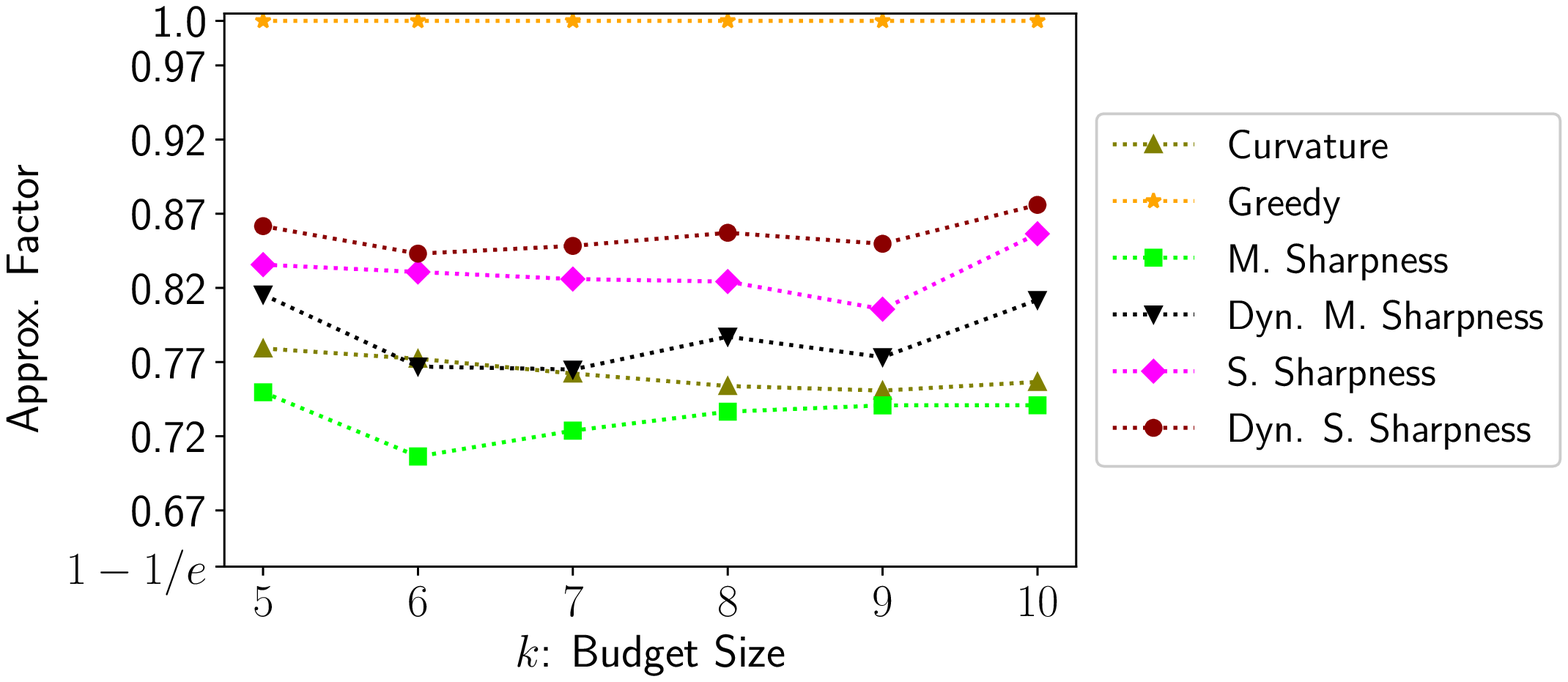}\\
(a) & (b)\\
\includegraphics[width=0.48\textwidth]{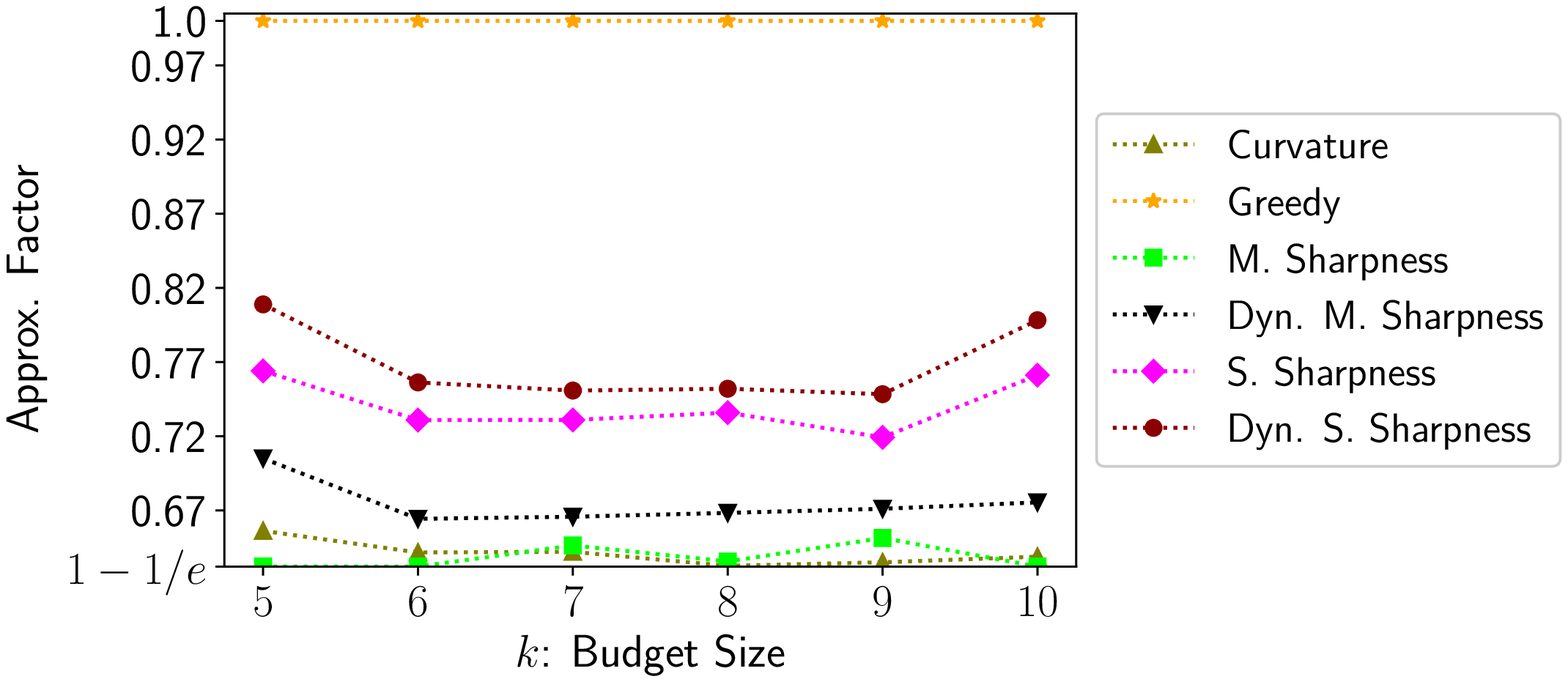}  &
\includegraphics[width=0.48\textwidth]{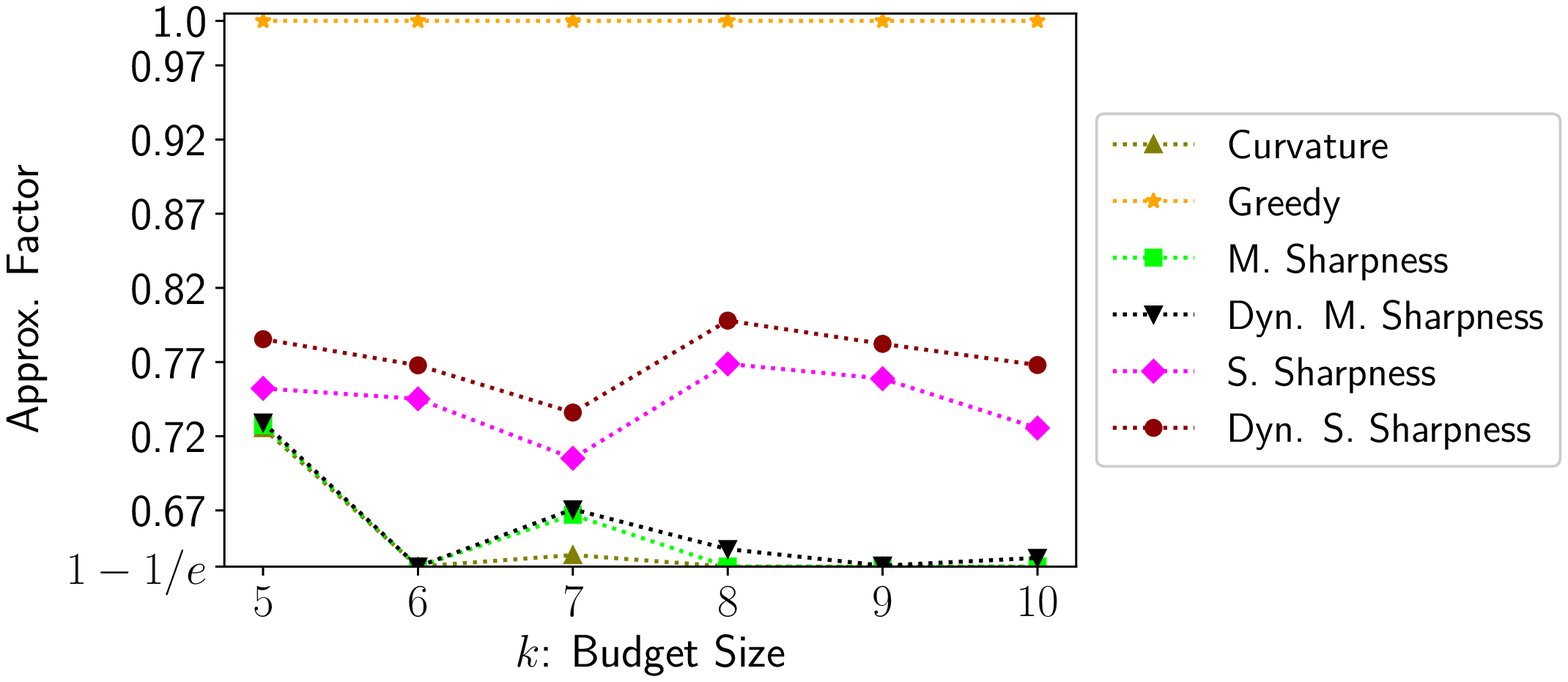}  \\
 (c) & (d) 
 \end{tabular}
 \caption{ Variations in the approximation factors with respect to different budgets $k$. {\it Non-parametric learning:} (a); {\it Movie recommendation:} (b) concave over modular with $\alpha =0.8$ and (c) facility location function; {\it Exemplar-based clustering:} (d)}
\label{fig:approx_factors}
\end{figure*}

\section{Extensions}


%
%
%
%

\subsection{Approximate Sharpness}\label{sec:approximate_sharpness}
Consider a parameter $\delta\in\left[0,1-\frac{1}{k}\right]$. For a set function $f:2^V\to\RR_+$ we denote the \emph{$\delta$-marginal value} for any subset $A\subseteq V$ and $e\in V$ by $f^\delta_A(e):=f(A+e)-(1-\delta)f(A)$. Now, for a given set of parameters $\delta\in\left[0,1-\frac{1}{k}\right]$, $\theta\in[0,1]$, and $c\geq 1$, we define \emph{approximate submodular sharpness} as follows,

\begin{definition}[Approximate Submodular Sharpness]\label{def:d_delta_sharp}
A non-negative monotone submodular function $f:2^V\to\RR_+$ is said to be \emph{$(\delta,c,\theta)$-sharp}, if there exists an optimal solution $S^*$ such that for any subset $S\subseteq V$ with $|S|\leq k$ the function satisfies
\begin{align*}
\sum_{e\in S^*\backslash S} f_S^\delta(e)\geq \left[\frac{|S^*\backslash S|}{k\cdot c}\right]^{\frac{1}{\theta}}\cdot f(S^*)
\end{align*}
\end{definition}

When relaxing the notion of marginal values to $\delta$-marginal values, we get the following result

\begin{theorem}\label{theorem:main_card2}
Consider a non-negative monotone submodular function $f:2^V\to\RR_+$ that is $(\delta,c,\theta)$-sharp. Then, for problem \eqref{eq:problem_def} the greedy algorithm returns a feasible set $S^g$ such that
\[f(S^g)\geq \frac{1}{1-\delta+\delta kc}\left[1-(1-\delta)^{\frac{1}{\theta}}\left(1-\frac{\theta}{c}\right)^{\frac{1}{\theta}}\right] \cdot f(S^*).\]
\end{theorem}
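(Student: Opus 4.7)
The plan is to extend the recurrence-and-convexity argument of Theorem \ref{theorem:main_card} to accommodate the $\delta$-relaxed marginals. Since $f^\delta_S(e) = f_S(e) + \delta f(S)$, the greedy rule based on $f_S(\cdot)$ coincides with that based on $f^\delta_S(\cdot)$; moreover, by the definition of the $\delta$-marginal, $f(S_i) - (1-\delta)f(S_{i-1}) = f^\delta_{S_{i-1}}(e_i^g)$ for $e_i^g$ the $i$-th greedy pick. Combining with the averaging bound and applying approximate sharpness in its inverted form $\frac{1}{|S^*\setminus S_{i-1}|} \geq \frac{f(S^*)^\theta}{kc}\bigl(\sum f^\delta_{S_{i-1}}(e)\bigr)^{-\theta}$, exactly as in Theorem \ref{theorem:main_card}, yields
\[
f(S_i) - (1-\delta)\,f(S_{i-1}) \;\geq\; \frac{f(S^*)^\theta}{kc}\,\Bigl(\sum_{e\in S^*\setminus S_{i-1}} f^\delta_{S_{i-1}}(e)\Bigr)^{1-\theta}.
\]
Submodularity and monotonicity give $\sum f_{S_{i-1}}(e) \geq f(S^*) - f(S_{i-1})$; adding the $\delta$-shift and using $|S^*\setminus S_{i-1}|\geq 1$ (else the claim is trivial) upgrades this to $\sum f^\delta_{S_{i-1}}(e) \geq f(S^*) - (1-\delta)\,f(S_{i-1})$, producing the scalar recurrence
\[
a_i \;\geq\; (1-\delta)\,a_{i-1} \;+\; \frac{a^\theta\,\bigl[a-(1-\delta)\,a_{i-1}\bigr]^{1-\theta}}{kc}, \qquad a_0 = 0,
\]
with $a_i := f(S_i)$ and $a := f(S^*)$.

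The second step is to solve this recurrence and match the theorem's closed-form bound. Normalize $a = 1$, define $h(x) := (1-\delta)x + \tfrac{1}{kc}\bigl(1-(1-\delta)x\bigr)^{1-\theta}$, and introduce the candidate lower-bound sequence
\[
b_i \;:=\; \frac{1 - (1-\delta)^{i/(k\theta)}\,\bigl(1 - i\theta/(kc)\bigr)^{1/\theta}}{1-\delta+\delta kc},
\]
chosen so that $b_0 = 0$ and $b_k$ matches the bound stated in Theorem \ref{theorem:main_card2}. The plan is to prove $a_i\geq b_i$ by induction, mirroring the structure of the proof of Theorem \ref{theorem:main_card}. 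This requires: (a) monotonicity of $h$ on the relevant sub-interval (a direct derivative check paralleling the interval $I$ in Theorem \ref{theorem:main_card}); and (b) the pointwise inequality $h(b_{i-1})\geq b_i$. Setting $M := 1-\delta+\delta kc$ and $G(x) := (1-\delta)^{x/(k\theta)}(1-x\theta/(kc))^{1/\theta}$, so that $b_i = (1-G(i))/M$, clearing denominators and using $M - (1-\delta) = \delta kc$ shows (b) is equivalent to
\[
\frac{M^\theta\,\bigl[\delta kc + (1-\delta)\,G(i-1)\bigr]^{1-\theta}}{kc} \;\geq\; \delta + (1-\delta)\,G(i-1) - G(i).
\]

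The main obstacle is establishing this last inequality. The plan is to use convexity of $G$ on $[0,k]$: computing $[\log G]'(x) = \log(1-\delta)/(k\theta) - 1/(kc - x\theta)$ and $[\log G]''(x) = -\theta/(kc - x\theta)^2$, one checks that $G''(x) = G(x)\bigl(([\log G]'(x))^2 + [\log G]''(x)\bigr) \geq 0$ since $([\log G]'(x))^2 \geq \theta/(kc-x\theta)^2$ when $\sqrt{\theta}\leq 1$; this parallels the convexity of $g$ exploited in the proof of Theorem \ref{theorem:main_card}. The tangent bound $G(i) \geq G(i-1) + G'(i-1)$ is then substituted into the right-hand side, using the identity $kc - x\theta = kc\,G(x)^\theta/(1-\delta)^{x/k}$ to rewrite $G'(i-1)$ as $G(i-1)\log(1-\delta)/(k\theta) - G(i-1)^{1-\theta}(1-\delta)^{(i-1)/k}/(kc)$, and the mixed factor $(1-\delta)^{(i-1)/k}$ is absorbed into $M^\theta$ on the left-hand side via $M = (1-\delta) + \delta kc$ and a standard weighted-power-mean comparison. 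Compared to Theorem \ref{theorem:main_card}, the extra $(1-\delta)$-factors and the $M^\theta$ term require considerably more delicate algebraic bookkeeping, and this is the one non-routine calculation of the argument. Once it is carried out, the induction delivers $a_k\geq b_k$, which is the bound claimed in Theorem \ref{theorem:main_card2}.
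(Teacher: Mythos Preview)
Your derivation of the recurrence
\[
a_i \;\geq\; (1-\delta)\,a_{i-1} \;+\; \frac{a^\theta}{kc}\,\bigl[a-(1-\delta)\,a_{i-1}\bigr]^{1-\theta}
\]
is correct and coincides with the paper. The gap is in the second half: the candidate sequence you propose,
\[
b_i \;=\; \frac{1 - (1-\delta)^{\,i/(k\theta)}\bigl(1-i\theta/(kc)\bigr)^{1/\theta}}{1-\delta+\delta kc},
\]
is \emph{not} a valid lower envelope for the recurrence. The inequality $h(b_{i-1})\geq b_i$ that you defer as ``the one non-routine calculation'' is in fact false in general, so no amount of bookkeeping will save it. Take $k=2$, $c=1$, $\delta=0.4$ and $\theta=0.01$. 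Then $G(1)=(0.6)^{50}(0.995)^{100}\approx 5\times 10^{-12}$, so $b_1\approx 1/M=1/1.4\approx 0.714$, while $h(b_0)=h(0)=1/(kc)=0.5$. Thus $h(b_0)<b_1$ and the induction breaks at the very first step. The failure is structural: the factor $(1-\delta)^{x/(k\theta)}$ in your $G$ collapses to zero almost instantly when $\theta$ is small, forcing $b_1$ to jump close to the terminal value $b_k$, far faster than the recurrence can follow. (Your convexity computation for $G$ is fine; the problem is that the tangent inequality, once substituted, leaves an extra term $G(i-1)\log(1-\delta)/(k\theta)$ on the right-hand side that blows up as $\theta\to 0$.)

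The paper takes a different certificate:
\[
g(x) \;=\; (1-\delta)^{1/\theta}\Bigl(1-\tfrac{\theta x}{kc}\Bigr)^{1/\theta}, \qquad b_i=\frac{1-g(i)}{M},
\]
with a \emph{constant} power of $(1-\delta)$. This choice is engineered so that $g'(x)=-\tfrac{1-\delta}{kc}\,g(x)^{1-\theta}$, with no logarithmic term, and therefore the convexity step reduces $h(b_{i-1})\geq b_i$ to
\[
M^\theta\bigl[\delta kc + (1-\delta)g(i-1)\bigr]^{1-\theta} \;\geq\; \delta kc + (1-\delta)\,g(i-1)^{1-\theta},
\]
which is a one-line consequence of H\"older's inequality applied to the pairs $(1-\delta,\,\delta kc)$ and $\bigl((1-\delta)g(i-1),\,\delta kc\bigr)$ with exponents $(1/\theta,\,1/(1-\theta))$. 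Your varying prefactor $(1-\delta)^{x/(k\theta)}$, presumably chosen to force $b_0=0$, destroys exactly this structure; the fix is to use the paper's $g$ and handle the initialization separately.
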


\noindent Observe that for $\delta =0$, we recover the result in Theorem \ref{theorem:main_card}.
%
In the following proof we will observe that the greedy algorithm automatically adapts to parameters $\delta$, $\theta$, and $c$, and they will not be required to be part of the input.
\begin{proof}[Proof of Theorem \ref{theorem:main_card2}]
Similarly to the proof of Theorem \ref{theorem:main_card}, let us denote by $S_{i}$ the set we obtained in the $i$-th iteration of the greedy algorithm. By using the properties of the function $f$, we obtain the following sequence of inequalities
\begin{align}
f(S_i) &- (1-\delta)f(S_{i-1}) = \notag\\
&\frac{\sum_{e\in S^*\backslash S_{i-1}}f(S_i) - (1-\delta)f(S_{i-1})}{|S^*\backslash S_{i-1}|} \notag\\
&\geq \frac{\sum_{e\in S^*\backslash S_{i-1}}f(S_{i-1}+e) - (1-\delta)f(S_{i-1})}{|S^*\backslash S_{i-1}|} \tag{choice of greedy} \\ \vspace{0.5em}
&= \frac{\sum_{e\in S^*\backslash S_{i-1}}f^\delta_{S_{i-1}}(e)}{|S^*\backslash S_{i-1}|}  \notag \\
&\geq \frac{f(S^*)^\theta}{kc}\cdot \left(\sum_{e\in S^*\backslash S_{i-1}}f^\delta_{S_{i-1}}(e)\right)^{1-\theta} \tag{sharpness} \\ \vspace{0.5em}
&\geq \frac{f(S^*)^\theta}{kc}\cdot [f(S_{i-1}\cup S^*)-f(S_{i-1}) \notag \\
&\quad + \delta |S^*\backslash S_{i-1}| f(S_{i-1})]^{1-\theta} \tag{submodularity}\\
&\geq \frac{f(S^*)^\theta}{kc}\cdot [f(S_{i-1}\cup S^*)-(1-\delta)f(S_{i-1})]^{1-\theta} \tag{$|S^*\backslash S_{i-1}| \geq 1$}\\
&\geq \frac{f(S^*)^\theta}{kc}\cdot [f(S^*) - (1-\delta)f(S_{i-1})]^{1-\theta}. \tag{monotonicity}
\end{align}
Therefore, we need to solve the following recurrence
\begin{equation}\label{eq:delta_rec} a_i \geq (1-\delta)a_{i-1} + \frac{a^\theta}{kc}\cdot [a - (1-\delta)a_{i-1}]^{1-\theta}\end{equation}
where $a_i =f(S_i)$, $a_0=0$ and $a=f(S^*)$.

Define $h(x) = (1-\delta)x+ \frac{a^\theta}{kc}\cdot [a-(1-\delta)x]^{1-\theta}$, where $x\in[0,a]$. Observe that
\(h'(x)=(1-\delta)-\frac{a^\theta(1-\theta)(1-\delta)}{kc}\cdot[a-(1-\delta)x]^{-\theta}.\)
Therefore, $h$ is increasing inside the interval \[I:= \left\{x: \ 0\leq x\leq \frac{a}{1-\delta}\cdot \left(1-\left(\frac{1-\theta}{kc}\right)^{1/\theta}\right) \right\}.\] Let us define
\[b_i := \frac{a}{1-\delta +\delta kc}\cdot \left[1-(1-\delta)^{\frac{1}{\theta}}\left(1-\frac{\theta }{c}\cdot\frac{i}{k}\right)^{\frac{1}{\theta}}\right] .\]
First, let us check that $b_i\in I$ for all $i\in\{1,\ldots, k\}$. Namely, for any $i$ we need to show that
\begin{equation*}
\frac{a}{1-\delta +\delta kc}\cdot \left[1-(1-\delta)^{\frac{1}{\theta}}\left(1-\frac{\theta }{c}\cdot\frac{i}{k}\right)^{\frac{1}{\theta}}\right] \leq   \frac{a}{1-\delta}\cdot \left(1-\left(\frac{1-\theta}{kc}\right)^{1/\theta}\right).
\end{equation*}
Since $1-\delta \leq 1-\delta +\delta kc$, then it is sufficient to prove that
\begin{equation*}
1-(1-\delta)^{\frac{1}{\theta}}\left(1-\frac{\theta }{c}\cdot\frac{i}{k}\right)^{\frac{1}{\theta}} \leq   1-\left(\frac{1-\theta}{kc}\right)^{1/\theta} \quad \Leftrightarrow \quad (1-\delta)(kc-i\theta)\geq 1-\theta
\end{equation*}
The expression $kc -i \theta$ is decreasing on $i$. Hence, we just need the inequality for $i=k$, namely $(1-\delta)k(c-\theta)\geq 1-\theta$, which is true since $c\geq1$, $k\geq 1$ and $\delta\leq 1-\frac{1}{k}$.

Our goal is to prove that $a_i\geq b_i$, so by induction let us assume that $a_{i-1}\geq b_{i-1}$ is true. By using monotonicity of $h$ on the interval $I$, we get $h(a_{i-1})\geq h(b_{i-1})$. Also, observe that recurrence \eqref{eq:delta_rec} is equivalent to write $a_i \geq h(a_{i-1})$ which implies that $a_i \geq h(b_{i-1})$. To finish the proof we will show that $h(b_{i-1})\geq b_i$.

Assume for simplicity that $a=1$. For $x\in [1,k]$, define
 \[g(x) :=  (1-\delta)^{1/\theta} \left(1-\frac{\theta}{kc}\cdot x\right)^{1/\theta}.\]
 Note that $g'(x) = - \frac{1-\delta}{kc}\cdot g(x)^{1-\theta}$ and $g''(x) = \frac{(1-\delta)(1-\theta)}{(kc)^2}\cdot g(x)^{1-2\theta}$. Observe that $g$ is convex, so for any $x_1,x_2\in[0,k]$ we have $g(x_2)\geq g(x_1) + g'(x_1)\cdot (x_2-x_1)$. By considering $x_2 = i$ and $x_1 = i-1$, we obtain
\begin{equation}\label{eq:ineq_proof1}
g(i) - g(i-1) - g'(i-1) \geq 0
\end{equation}
On the other hand,
\begin{equation*} 
h(b_{i-1})- b_i=(1-\delta)[1-g(i-1)] + \frac{(1-\delta+\delta kc)}{kc}\left[1- \frac{1-\delta}{1-\delta+\delta kc}(1-g(i-1))\right]^{1-\theta} - 1 + g(i)
\end{equation*}
which is the same as
\begin{equation}
 h(b_{i-1})- b_i=g(i) - g(i-1) -\delta +\delta g(i-1) +  \frac{(1-\delta+\delta kc)^\theta}{kc}\left[\delta kc +(1-\delta)g(i-1)\right]^{1-\theta}. \label{eq:ineq_proof2}
\end{equation}

By H\"older's inequality we know that
\begin{equation*}
(1-\delta+\delta kc)^\theta\left[\delta kc +(1-\delta)g(i-1)\right]^{1-\theta}  \geq  \delta kc +(1-\delta)g(i-1)^{1-\theta}.
\end{equation*}
Then, by using this bound we get
\begin{align*}\eqref{eq:ineq_proof2} \quad &\geq \quad g(i) - g(i-1) -\delta +\delta g(i-1) +  \frac{1}{kc}\left[\delta kc +(1-\delta)g(i-1)^{1-\theta}\right] \\
& = \quad g(i) - g(i-1) - g'(i-1) + \delta g(i-1) \\
& \geq \quad 0,
\end{align*}
where we used the definition of $g'$ and inequality \eqref{eq:ineq_proof1}. This proves that $h(b_{i-1})- b_i \geq 0$, concluding the proof since $a_k=f(S^g)\geq b_k$.
\end{proof}

\bibliography{bibliography_updated}

\begin{thebibliography}{46}
\providecommand{\natexlab}[1]{#1}
\providecommand{\url}[1]{\texttt{#1}}
\expandafter\ifx\csname urlstyle\endcsname\relax
  \providecommand{\doi}[1]{doi: #1}\else
  \providecommand{\doi}{doi: \begingroup \urlstyle{rm}\Url}\fi

\bibitem[Balcan and Braverman(2017)]{balcan_etal17}
Maria-Florina Balcan and M.~Braverman.
\newblock Nash equilibria in pertubation-stable games.
\newblock \emph{Theory of Computing}, 13\penalty0 (13):\penalty0 1--31, 2017.

\bibitem[Balcan et~al.(2009)Balcan, Blum, and Gupta]{balcan_etal09}
Maria-Florina Balcan, Avrim Blum, and Anupam Gupta.
\newblock Approximate clustering without the approximation.
\newblock In \emph{Proceedings of the Twentieth Annual ACM-SIAM Symposium on
  Discrete Algorithms}, SODA '09, pages 1068--1077. Society for Industrial and
  Applied Mathematics, 2009.

\bibitem[Ben-David(2015)]{ben_david15a}
Shai Ben-David.
\newblock Computational feasibility of clustering under clusterability
  assumptions.
\newblock \emph{CoRR}, abs/1501.00437, 2015.

\bibitem[Bilu and Linial(2012)]{bilu_linial12}
Yonatan Bilu and Nathan Linial.
\newblock Are stable instances easy?
\newblock \emph{Combinatorics, Probability and Computing}, 21\penalty0
  (5):\penalty0 643?660, 2012.

\bibitem[Bolte et~al.(2007)Bolte, Daniilidis, and Lewis]{bolte_etal07}
J\'er\^{o}me Bolte, Aris Daniilidis, and Adrian Lewis.
\newblock The {L}ojasiewicz inequality for nonsmooth subanalytic functions with
  applications to subgradient dynamical systems.
\newblock \emph{SIAM Journal on Optimization}, 17\penalty0 (4):\penalty0
  1205--1223, 2007.

\bibitem[Bolte et~al.(2014)Bolte, Sabach, and Teboulle]{bolte_etal14}
J{\'e}r{\^o}me Bolte, Shoham Sabach, and Marc Teboulle.
\newblock Proximal alternating linearized minimization for nonconvex and
  nonsmooth problems.
\newblock \emph{Mathematical Programming}, 146\penalty0 (1):\penalty0 459--494,
  2014.

\bibitem[Buchbinder and Feldman(2016)]{buchbinder_etal16}
Niv Buchbinder and Moran Feldman.
\newblock Deterministic algorithms for submodular maximization problems.
\newblock In \emph{Proceedings of the 27th Annual ACM-SIAM Symposium on
  Discrete Algorithms (SODA)}, pages 392--403, 2016.

\bibitem[Calinescu et~al.(2011)Calinescu, Chekuri, P{\'a}l, and
  Vondr{\'a}k]{calinescu2011maximizing}
Gruia Calinescu, Chandra Chekuri, Martin P{\'a}l, and Jan Vondr{\'a}k.
\newblock Maximizing a monotone submodular function subject to a matroid
  constraint.
\newblock \emph{SIAM Journal on Computing}, 40\penalty0 (6):\penalty0
  1740--1766, 2011.

\bibitem[Chatziafratis et~al.(2017)Chatziafratis, Roughgarden, and
  Vondr{\'a}k]{chatzia_etal17}
Vaggos Chatziafratis, Tim Roughgarden, and Jan Vondr{\'a}k.
\newblock Stability and recovery for independence systems.
\newblock In \emph{Proceedings of the 25th European Symposium on Algorithms
  (ESA)}, 2017.

\bibitem[Conforti and Cornu\'{e}jols(1984)]{conforti_etal84}
Michele Conforti and G\'{e}rard Cornu\'{e}jols.
\newblock Submodular set functions, matroids and the greedy algorithm: Tight
  worst-case bounds and some generalizations of the rado-edmonds theorem.
\newblock \emph{Discrete Applied Mathematics}, 7\penalty0 (3):\penalty0 251 --
  274, 1984.

\bibitem[Daniely et~al.(2012)Daniely, Linial, and Saks]{daniely_etal12}
Amit Daniely, Nati Linial, and Michael~E. Saks.
\newblock Clustering is difficult only when it does not matter.
\newblock \emph{CoRR}, abs/1205.4891, 2012.

\bibitem[Das and Kempe(2008)]{das2008algorithms}
Abhimanyu Das and David Kempe.
\newblock Algorithms for subset selection in linear regression.
\newblock In \emph{Proceedings of the 40th Annual ACM Symposium on the Theory
  of Computing (STOC)}, pages 45--54, 2008.

\bibitem[Ene and Nguyen(2016)]{ene_etal16}
Alina Ene and Huy~L. Nguyen.
\newblock Constrained submodular maximization: Beyond 1/e.
\newblock In \emph{Proceedings of the 57th Annual Symposium on Foundations of
  Computer Science (FOCS)}, pages 248--257, 2016.

\bibitem[Everingham et~al.(2012)Everingham, Van~Gool, Williams, Winn, and
  Zisserman]{voc_2012}
M.~Everingham, L.~Van~Gool, C.~K.~I. Williams, J.~Winn, and A.~Zisserman.
\newblock The {PASCAL} {V}isual {O}bject {C}lasses {C}hallenge 2012 {(VOC2012)}
  {R}esults.
\newblock
  http://www.pascal-network.org/challenges/VOC/voc2012/workshop/index.html,
  2012.

\bibitem[Feige(1998)]{feige1998threshold}
Uriel Feige.
\newblock A threshold of ln n for approximating set cover.
\newblock \emph{Journal of the ACM (JACM)}, 45\penalty0 (4):\penalty0 634--652,
  1998.

\bibitem[Feldman(2018)]{feldman18}
Moran Feldman.
\newblock Guess free maximization of submodular and linear sums.
\newblock arXiv, 2018.

\bibitem[Fisher et~al.(1978)Fisher, Nemhauser, and Wolsey]{fisher1978analysis}
Marshall~L. Fisher, George~L. Nemhauser, and Laurence~A. Wolsey.
\newblock An analysis of approximations for maximizing submodular set
  functions--{II}.
\newblock In \emph{Polyhedral combinatorics}, pages 73--87. Springer, 1978.

\bibitem[Gomes and Krause(2010)]{gomes_krause10}
Ryan Gomes and Andreas Krause.
\newblock Budgeted nonparametric learning from data streams.
\newblock In \emph{Proceedings of the 27th International Conference on Machine
  Learning (ICML)}, pages 391--398, 2010.

\bibitem[Harper and Konstan(2016)]{movielens}
F.~Maxwell Harper and Joseph~A. Konstan.
\newblock The movielens datasets: History and context.
\newblock \emph{ACM Transactions on Interactive Intelligent Systems (TiiS)},
  5\penalty0 (4):\penalty0 19:1--19, 2016.

\bibitem[He and Kempe(2014)]{he_etal14}
Xinran He and David Kempe.
\newblock Stability of influence maximization.
\newblock In \emph{Proceedings of the 20th ACM SIGKDD International Conference
  on Knowledge Discovery and Data Mining}, KDD '14, pages 1256--1265. ACM,
  2014.

\bibitem[Hoffman(1952)]{hoffman52}
Alan~J. Hoffman.
\newblock On approximate solutions of systems of linear inequalities.
\newblock \emph{Mathematical Programming}, 49\penalty0 (4):\penalty0 263--265,
  1952.

\bibitem[Il'ev(2001)]{ilev01}
Victor~P. Il'ev.
\newblock An approximation guarantee of the greedy descent algorithm for
  minimizing a supermodular set function.
\newblock \emph{Discrete Applied Mathematics}, 114\penalty0 (1):\penalty0 131
  -- 146, 2001.

\bibitem[Iyer et~al.(2013)Iyer, Jegelka, and Bilmes]{iyer_etal13}
Rishabh~K Iyer, Stefanie Jegelka, and Jeff~A Bilmes.
\newblock Curvature and optimal algorithms for learning and minimizing
  submodular functions.
\newblock In \emph{Proceedings of the 27th International Conference on Neural
  Information Processing Systems (NeurIPS)}, pages 2742--2750, 2013.

\bibitem[Karimi et~al.(2016)Karimi, Nutini, and Schmidt]{karimi_etal16}
Hamed Karimi, Julie Nutini, and Mark Schmidt.
\newblock Linear convergence of gradient and proximal-gradient methods under
  the {P}olyak-{L}ojasiewicz condition.
\newblock In \emph{Machine Learning and Knowledge Discovery in Databases},
  pages 795--811, 2016.

\bibitem[Kaufman and Rousseeuw(2009)]{kaufman_09}
L.~Kaufman and P.~J. Rousseeuw.
\newblock \emph{Finding groups in data: an introduction to cluster analysis},
  volume 344.
\newblock Wiley-Interscience, 2009.

\bibitem[Kempe et~al.(2003)Kempe, Kleinberg, and Tardos]{kempe_etal03}
David Kempe, Jon Kleinberg, and \'{E}va Tardos.
\newblock Maximizing the spread of influence through a social network.
\newblock \emph{Theory of Computing}, 11\penalty0 (4):\penalty0 105--147, 2003.

\bibitem[Kerdreux et~al.(2018)Kerdreux, d'Aspremont, and
  Pokutta]{kerdreux_etal18}
Thomas Kerdreux, Alexandre d'Aspremont, and Sebastian Pokutta.
\newblock Restarting {F}rank-{W}olfe.
\newblock arXiv, 2018.

\bibitem[Krause and Guestrin(2005{\natexlab{a}})]{krause_etal05}
Andreas Krause and Carlos Guestrin.
\newblock Near-optimal nonmyopic value of information in graphical models.
\newblock In \emph{Proceedings of the 21st Conference on Uncertainty in
  Artificial Intelligence (UAI)}, pages 324--331, 2005{\natexlab{a}}.

\bibitem[Krause and Guestrin(2005{\natexlab{b}})]{krause_guestrin05}
Andreas Krause and Carlos Guestrin.
\newblock Near-optimal nonmyopic value of information in graphical models.
\newblock In \emph{Proceedings of the Twenty-First Conference on Uncertainty in
  Artificial Intelligence}, UAI'05, pages 324--331. AUAI Press,
  2005{\natexlab{b}}.

\bibitem[Krause et~al.(2008{\natexlab{a}})Krause, McMahan, Guestrin, and
  Gupta]{krause2008robust}
Andreas Krause, H.~Brendan McMahan, Carlos Guestrin, and Anupam Gupta.
\newblock Robust submodular observation selection.
\newblock \emph{Journal of Machine Learning Research}, 9:\penalty0 2761--2801,
  2008{\natexlab{a}}.

\bibitem[Krause et~al.(2008{\natexlab{b}})Krause, Singh, and
  Guestrin]{krause_etal08a}
Andreas Krause, Ajit Singh, and Carlos Guestrin.
\newblock Near-optimal sensor placements in gaussian processes: Theory,
  efficient algorithms and empirical studies.
\newblock \emph{Journal of Machine Learning Research}, 9:\penalty0 235--284,
  2008{\natexlab{b}}.

\bibitem[Krause et~al.(2009)Krause, Rajagopal, Gupta, and
  Guestrin]{krause_etal09}
Andreas Krause, Ram Rajagopal, Anupam Gupta, and Carlos Guestrin.
\newblock Simultaneous placement and scheduling of sensors.
\newblock In \emph{Proceedings of the 8th ACM/IEEE Conference on Information
  Processing in Sensor Networks (IPSN)}, pages 181--192, 2009.

\bibitem[Lin and Bilmes(2009)]{lin_etal09}
Hui Lin and Jeff~A. Bilmes.
\newblock How to select a good training-data subset for transcription:
  submodular active selection for sequences.
\newblock In \emph{Proceedings of the 10th Annual Conference of the
  International Speech Communication Association (INTERSPEECH)}, pages
  2859--2862, 2009.

\bibitem[{\L{}}ojasiewicz(1963)]{lojas63}
Stanislas {\L{}}ojasiewicz.
\newblock Une propri\'et\'e topologique des sous-ensembles analytiques r\'eels.
\newblock \emph{Les \'equations aux d\'eriv\'ees partielles}, pages 1575--1595,
  1963.

\bibitem[Lojasiewicz(1993)]{lojas93}
Stanislas Lojasiewicz.
\newblock Sur la g\'eom\'etrie semi-et sous-analytique.
\newblock \emph{Annales de l'Institut Fourier}, 43\penalty0 (5):\penalty0
  1575--1595, 1993.

\bibitem[Mirzasoleiman et~al.(2015)Mirzasoleiman, Badanidiyuru, Karbasi,
  Vondr\'{a}k, and Krause]{mirzasoleiman_etal15}
Baharan Mirzasoleiman, Ashwinkumar Badanidiyuru, Amin Karbasi, Jan Vondr\'{a}k,
  and Andreas Krause.
\newblock Lazier than lazy greedy.
\newblock In \emph{Proceedings of the 29th AAAI Conference on Artificial
  Intelligence (AAAI)}, pages 1812--1818, 2015.

\bibitem[Nemhauser and Wolsey(1978)]{nemhauser1978hardness}
George~L. Nemhauser and Laurence~A. Wolsey.
\newblock Best algorithms for approximating the maximum of a submodular set
  function.
\newblock \emph{Mathematics of Operations Research}, 3\penalty0 (3):\penalty0
  177--188, 1978.

\bibitem[Nemhauser et~al.(1978)Nemhauser, Wolsey, and
  Fisher]{nemhauser1978analysis}
George~L. Nemhauser, Laurence~A. Wolsey, and Marshall~L. Fisher.
\newblock An analysis of approximations for maximizing submodular set
  functions--{I}.
\newblock \emph{Mathematical Programming}, 14\penalty0 (1):\penalty0 265--294,
  1978.

\bibitem[Nemirovskii and Nesterov(1985)]{nemirovski_nesterov85}
Arkadi~S. Nemirovskii and Yurii~E. Nesterov.
\newblock Optimal methods of smooth convex minimization.
\newblock \emph{USSR Computational Mathematics and Mathematical Physics},
  25\penalty0 (2):\penalty0 21--30, 1985.

\bibitem[Polyak(1979)]{polyak79}
B.~Polyak.
\newblock Sharp minima.
\newblock Presented in Workshop on generalized Lagrangians and their
  applications (IIASA), 1979.

\bibitem[Polyak(1963)]{polyak63}
Boris Polyak.
\newblock Gradient methods for the minimisation of functionals.
\newblock \emph{{USSR} Computational Mathematics and Mathematical Physics},
  3:\penalty0 864--878, 12 1963.

\bibitem[Powers et~al.(2016)Powers, Bilmes, Krout, and Atlas]{powers_etal16}
Thomas Powers, Jeff Bilmes, David~W. Krout, and Les Atlas.
\newblock Constrained robust submodular sensor selection with applications to
  multistatic sonar arrays.
\newblock In \emph{Proceedings of the 19th International Conference on
  Information Fusion (FUSION)}, pages 2179--2185, 2016.

\bibitem[Roulet and d'Aspremont(2017)]{roulet_daspremont17}
Vincent Roulet and Alexandre d'Aspremont.
\newblock Sharpness, restart and acceleration.
\newblock In \emph{Proceedings of the 31st International Conference on Neural
  Information Processing Systems (NeurIPS)}, pages 1119--1129, 2017.

\bibitem[Sviridenko(2004)]{sviridenko_04}
Maxim Sviridenko.
\newblock A note on maximizing a submodular set function subject to a knapsack
  constraint.
\newblock \emph{Operations Research Letters}, 32\penalty0 (1):\penalty0 41--43,
  2004.

\bibitem[Sviridenko et~al.(2015)Sviridenko, Vondr\'{a}k, and
  Ward]{sviridenko_etal15}
Maxim Sviridenko, Jan Vondr\'{a}k, and Justin Ward.
\newblock Optimal approximation for submodular and supermodular optimization
  with bounded curvature.
\newblock In \emph{Proceedings of the 26th Annual ACM-SIAM Symposium on
  Discrete Algorithms (SODA)}, pages 1134--1148, 2015.

\bibitem[Tsanas et~al.(2010)Tsanas, Little, McSharry, and Ramig]{tsanas_etal10}
A.~Tsanas, M.~A. Little, P.~E. McSharry, and L.~O. Ramig.
\newblock Enhanced classical dysphonia measures and sparse regression for
  telemonitoring of parkinson's disease progression.
\newblock In \emph{Proceedings of the 35th IEEE International Conference on
  Acoustics, Speech and Signal Processing (ICASSP)}, pages 594--597, 2010.

\end{thebibliography}

\end{document}